\definecolor{forestgreen}{HTML}{00A64F}
\definecolor{lavender}{HTML}{EC008C}%{F49EC4}
\definecolor{purple}{rgb}{0.6, 0.2, 0.8}
\newtheorem{thm}{Theorem}%[section]
\newtheorem{lem}[thm]{Lemma}
\title{A Formal Analysis of Algorithms for Matroids and Greedoids} %TODO Please add
\author{Mohammad Abdulaziz}{King's College London, United Kingdom  \and \url{https://mabdula.github.io/}}{mohammad.abdulaziz@kcl.ac.uk}{https://orcid.org/0000-0002-8244-518X}{}
\author{Thomas Ammer}{King's College London, United Kingdom \and \url{https://toamme.github.io/}}{thomas.ammer@kcl.ac.uk}{https://orcid.org/0009-0001-5301-4620}{}
\author{Shriya Meenakshisundaram}{King's College London, United Kingdom}{shriya.meenakshisundaram@kcl.ac.uk}{https://orcid.org/0009-0009-5816-8264}{}
\author{Adem Rimpapa}{Technische Universität München, Germany}{adem.rimpapa@in.tum.de}{https://orcid.org/0009-0005-4931-7513}{}
\authorrunning{M. Abdulaziz, T. Ammer, S. Meenakshisundaram, A. Rimpapa} %TODO mandatory. First: Use abbreviated first/middle names. Second (only in severe cases): Use first author plus 'et al.'
\keywords{Matroids, Greedoids, Combinatorial Optimisation, Graph Algorithms, Isabelle/HOL, Formal Verification} %Algorithms for Combinatorial Optimisation} %TODO mandatory; please add comma-separated list of keywords
\lstdefinestyle{inline}{%
    basicstyle=\ttfamily%
}
\lstdefinelanguage{Isabelle}{ morekeywords={for,context,inductive,begin,end,locale,fixes,record,type_synonym,definition,fun,function,primrec,where,lemma,theorem,unfolding,by,
shows,assumes,and,datatype,using,abbreviation,moreover,have,hence,thus,qed,proof,let,ultimately,show,next,in,if, value, thm,corollary}
    , sensitive=true
    , showstringspaces=false
    , framerule=0pt
    , xleftmargin=2em
    , numbers=left
    , numberstyle=\ttfamily\small
    , firstnumber=1
    , stepnumber=2
    , basicstyle=\ttfamily\small
    , backgroundcolor = \color{white}
    , keywordstyle = {\color{blue}}
    , breaklines=true
    , showspaces=false
    , morecomment=[l]{--}
    , morecomment=[s]{(*}{*)}
    , commentstyle=\color{gray}
    , morestring=[b]"
    , literate={↦}{{$\mapsto$}}{1}
               {∧}{{$\wedge$}}{1}
               {×}{{$\times$}}{1}
               {≡}{{$\equiv$}}{1}
               {∀}{{$\forall$}}{1}
               {∃}{{$\exists$}}{1}  {\\<times>}{{$\times$}}{1}
               {\\<and>}{{$\land$}}{1} {\\<inter>}{{$\cap$}}{1}
        {∈}{{$\in$}}{1} {⇒}{{$\Rightarrow$}}{1} {\\<Rightarrow>}{{$\Rightarrow$}}{1} {\\<And>}{{$\bigwedge$}}{1} {\\<forall>}{{$\forall$}}{1}   {\\<in>}{{$\in$}}{1} {\\<exists>}{{$\exists$}}{1}
        {\\<subseteq>}{{$\subseteq$}}{1} {\\<longrightarrow>}{{$\longrightarrow$}}{1} {\\<Longrightarrow>}{{$\Longrightarrow$}}{1}
        {\\<notin>}{{$\not\in$}}{1}
        {λ}{{$\lambda$}}{1} {\\<lambda>}{{$\lambda$}}{1} {::}{{$::$}}{1}
        {\⊆}{{$\subseteq$}}{1} {\\<subset>}{{$\subset$}}{1} {\\<^sub>m}{{$_m$}}{1} {\\<longleftrightarrow>}{{$\longleftrightarrow$}}{3}
        {\\<pi>}{{$\pi$}}{1} {\\<delta>}{{$\delta$}}{1} {⟦}{{$\llbracket$}}{1} {⟧}{{$\rrbracket$}}{1}
        {⟹}{{$\Longrightarrow$}}{3} {\\<not>}{{$\lnot$}}{1} {\\<le>}{{$\le$}}{1} {\\<rightharpoonup>}{{$\rightharpoonup$}}{2}
        {\\<^sub>\\<V>}{{$_{\mathcal V}$}}{1} {\\<lparr>}{{$\llparenthesis$}}{1} {\\<rparr>}{{$\rrparenthesis$}}{1}
        {\\<leftarrow>}{{$\leftarrow$}}{1} {\\<^sub>\\<O>}{{$_{\mathcal O}$}}{1} {\\<^sub>I}{{$_{\texttt{I}}$}}{1}
        {\\<^sub>G}{{$_{\texttt{G}}$}}{2} {\\<phi>}{{$\varphi$}}{1} {\\<Phi>}{{$\Phi$}}{1} {\\<psi>}{{$\psi$}}{1} {\\<Psi>}{{$\Psi$}}{1}
        {\\<^sub>S}{{$_{\texttt S}$}}{1} {\\<inverse>}{{$^{-1}$}}{1} {\\<^sub>O}{{$_{\texttt O}$}}{1} {\\<^bold>\⋀}{{$\bm\bigwedge$}}{1} {⋀}{{$\bigwedge$}}{1}
        {\\<^bold>\\<or>}{{$\bm\lor$}}{1} {\\<^sub>G}{{$_{\texttt G}$}}{1} {\\<Pi>}{{$\Pi$}}{1} {\\<^sub>I}{{$_{\texttt I}$}}{1} {\≠}{{$\neq$}}{1}
        {\\<bottom>}{{$\bot$}}{1} {\\<^sub>+}{{$_\texttt +$}}{1} {\\<^bold>\\<and>}{{$\bm\land$}}{1} {\\<^bold>\\<not>}{{$\bm\lnot$}}{1}
        {\\<^sub>1}{{$_1$}}{1} {\\<subpset>1}{{$\supset$}}{1} {\\<^sub>2}{{$_2$}}{1} {\\<A>}{{$\mathcal A$}}{1} {\\<Turnstile>}{{$\models$}}{2} {\\<^sub>∀}{{$_\forall$}}{1} {\\<noteq>}{{$\neq$}}{1} 
        {\\<^sub>0}{{$_0$}}{1} {\\<tau>}{{$\tau$}}{1}  {\\<^sub>\\<Omega>}{{$_\Omega$}}{1} {\\<^sub>V}{{$_V$}}{1} {\\<^bold>\\<Or>}{{$\bm\bigvee$}}{1}
        {\\<^sub>P}{{$_\texttt P$}}{1} {\\<^sub>X}{{$_\texttt X$}}{1} {⟹}{{$\Longrightarrow$}}{2} {\\<or>}{{$\lor$}}{1} {\\<^sub>\\<pi>}{{$_\pi$}}{1}
        {\\<^sub>s}{{$_s$}}{1} {\\<^sub>t}{{$_t$}}{1} {\\<^sub>a}{{$_a$}}{1} {\\<^sub>r}{{$_r$}}{1} {\\<^sub>t}{{$_t$}}{1} {\\<^sub>e}{{$_e$}}{1} {\\<^sub>n}{{$_n$}}{1} {\\<^sub>d}{{$_d$}}{1} {\\<^sub>i}{{$_i$}}{1} {\\<^sub>v}{{$_v$}}{1} {\\<^sub>j}{{$_j$}}{1} {\\<^sub>b}{{$_b$}}{1} {\∩}{{$\cap$}}{1} {\\<union>}{{$\cup$}}{1} {\\<Union>}{{$\bigcup$}}{1} {\\<^sup>c\\<TTurnstile>\\<^sub>=}{{${}^c\models_=$}}{1}
        {\\<open>}{{<}}{1} {\\<close>}{{>}}{1} {\\<langle>}{{$\langle$}}{1} {\\<rangle>}{{$\rangle$}}{1} {\\<ge>}{{$\ge$}}{1} {\\<^sup>-\\<^sup>1\\<^sub>C}{{$^{\texttt{-1}}_\texttt{C}$}}{2} {\\<^sup>+\\<^sub>C}{{$^\texttt{+}_\texttt{C}$}}{1} {\\<circ>\\<^sub>C}{{$\circ^\texttt C$}}{1} {\\<top>\\<^sub>C}{{$\top_\texttt{C}$}}{2} {\\<bottom>\\<^sub>C}{{$\bot_\texttt{C}$}}{2} {\\<not>\\<^sub>C}{{$\neg_\texttt{C}$}}{2}
        {\\<squnion>\\<^sub>C}{{$\sqcup_\texttt{C}$}}{2} {\\<sqinter>\\<^sub>C}{{$\sqcap_\texttt{C}$}}{2} {\∃\\<^sub>C}{{$\exists_\texttt{C}$}}{2}  {∀\\<^sub>C}{{$\forall_\texttt{C}$}}{2} {=\\<^sub>C}{{$=_\texttt{C}$}}{2} {\\<sigma>}{{$\sigma$}}{2} {\∉}{{$\notin$}}{1} {⊕}{{$\oplus$}}{1} {\\<nexists>}{{$\nexists$}}{1} {\\<setminus>}{{$\setminus$}}{1}
}
\begin{document}

\maketitle

%TODO mandatory: add short abstract of the document
\begin{abstract}
We present a formal analysis, in Isabelle/HOL~\cite{isabelleholref}, of optimisation algorithms for matroids, which are useful generalisations of combinatorial structures that occur in optimisation, and greedoids, which are a generalisation of matroids.
Although some formalisation work has been done earlier on matroids, our work here presents the first formalisation of results on greedoids, and many results we formalise in relation to matroids are also formalised for the first time in this work.
We formalise the analysis of a number of optimisation algorithms for matroids and greedoids.
We also derive from those algorithms executable implementations of Kruskal's algorithm for computing optimal spanning trees, an algorithm for maximum cardinality matching for bi-partite graphs, and Prim's algorithm for computing minimum weight spanning trees.
\end{abstract}

\newcommand{\savespace}[1]{}

\section{Introduction}

Matroids are algebraic structures that generalise the concepts of graphs and matrices, where a matroid is a pair of sets $(E, \mathcal{F})$ satisfying a number of conditions, e.g.\ $E$ is a finite set and $\mathcal{F}\subseteq 2^{E}$.
Although they have a generally rich mathematical structure which inherently justifies studying them, there are two main motivations to studying them in the context of combinatorial optimisation.
First, optimisation problems defined on matroids generalise, albeit not always in a practically useful way, many standard combinatorial optimisation problems: e.g.\ the minimisation problem for matroids over modular objective functions generalises the travelling salesman problem, the shortest-path problem, minimum spanning trees, and Steiner trees, all of which are important standard optimisation problems.
This means that verifying an algorithm for the minimisation problem, for instance, allows for deriving algorithms to solve many other problems (modulo some conditions that we will discuss later), and in many cases those derived algorithms are practical, albeit sometimes asymptotically slower than the fastest possible algorithms.
The second reason to study matroids is that, since they are an abstraction of other concrete structures, like graphs, reasoning about (algorithms that process) them avoids a lot of the combinatorial reasoning and the case analyses encountered when reasoning about (algorithms processing) the more concrete structures.

In this paper we formalise, in Isabelle/HOL~\cite{isabelleholref}, a number of results about matroids, and greedoids, which are a generalisation of matroids.
We formally analyse greedy algorithms for the maximisation problems for both structures and prove guarantees on the optimality of the solutions those greedy algorithms find.
We also formally analyse an algorithm for the matroid intersection problem, where one aims to find the largest member of $\mathcal{F}_1\cap\mathcal{F}_2$, for two matroids $(E, \mathcal{F}_1)$ and $(E, \mathcal{F}_2)$.
We then instantiate the verified algorithms with concrete instances to obtain executable algorithms for standard graph problems: we obtain an $O(n^2\cdot \log^2 n)$ implementation of Kruskal's algorithm from the greedy algorithm for matroids,\footnote{We verify a variant to compute maximum spanning trees.} an $O(n\cdot m\cdot (\log n + \log m))$ algorithm for maximum cardinality bi-partite matching from the matroid intersection algorithm, and an $O(n\cdot m\cdot\log n)$ implementation of Prim's algorithm for minimum weight spanning trees from the greedy algorithm for greedoids, where $n$ and $m$ are the number of vertices and edges of the input graph, respectively.\footnote{We did not verify the running times of these algorithms in Isabelle/HOL.
Comments on running time are an informal analysis only.} %\footnote{These running times were not verified.} 
%\todo{the fact that we did not verify RTs in Isabelle was apparently not clear enough.}
In addition to verifying these algorithms, we formalise a number of results from the theory of matroids and greedoids, e.g.\ the fact that matroids can be fully characterised by the greedy algorithm, and the relationship between greedoids and antimatroids.
With the exception of deriving Kruskal's algorithm from the greedy algorithm for maximisation over matroids~\cite{Kruskal-AFP} and Edmonds' Intersection Theorem (Theorem~\ref{lemma:ranks})~\cite{lean_matroids}, all the above results were not formalised before in any theorem prover.
In our formalisation, we follow Korte and Vygen's textbook~\cite{KorteVygenOptimisation} and Schrijver's textbook~\cite{schrijverBook}, with the former being our main reference.

%\subparagraph*{Availability} Our formalisation is available online at \url{https://github.com/mabdula/Isabelle-Graph-Library}.

% - Optimisation has been treated in many formalisations
% - Matroids and greedoids were not
% - Why are they interesting:
%   * generalisation of graphs and other structures
%   * allow for more abstract/algebraic proofs compared to when performed on graphs, vectors, there are other more theoretical ones too
%     . pen-and-paper proofs are very close to formal proofs
%   * Other example matroids that we didn't formalise
%   * Get a quote from Edmonds' original paper on why come up with matroids?
% - We formalise basic theory:
%   * A, B, C
% - We show how we could derive relatively efficient algorithms for spanning trees and matchings
% - Clarify on augmentation: third matroid axioms vs. augmenting paths for Mintersection, Matching, Flows.

\section{Background}

\subparagraph*{Matroids}
A set system $(E, \mathcal{F})$ consists of a \emph{carrier} set $E$ and a collection of \emph{independent sets} $\mathcal{F} \subseteq 2^{E}$.
We only consider finite carriers. We call $(E, \mathcal{F})$ an \emph{independence system} iff (M1) $\emptyset \in \mathcal{F}$ and (M2) $X \subseteq Y$ and $Y \in \mathcal{F}$ implies $X \in \mathcal{F}$. Members of $\mathcal{F}$ are called \emph{independent}.
Sets $F\subseteq E$ with $F \not \in \mathcal{F}$ are called \emph{dependent}. A \emph{matroid} is an independence system satisfying the \emph{augmentation property} (M3): for $X, Y \in \mathcal{F}$ and $|X| > |Y|$, there is $x \in X \setminus Y$ with $Y \cup \lbrace x \rbrace \in \mathcal{F}$.

A \emph{basis} of a set $X \subseteq E$ is an independent subset of $X$ maximal w.r.t.\ set inclusion.
On the contrary, a \emph{circuit} $C \subseteq E$ is a minimal dependent set, e.g.\ there are no circuits if $\mathcal{F} = 2^E$.
For $X \subseteq E$, the \emph{lower rank} $\rho(X)$ and \emph{(upper) rank} $r(X)$ are the cardinalities of the smallest and largest bases of $X$, respectively.
The \emph{rank quotient} $q(E, \mathcal{F})$ is defined as the minimum of $\frac{\rho(X)}{r(X)}$ over all $X \subseteq E$.
For example, if $\mathcal{F}= \lbrace \emptyset, \lbrace 1\rbrace, \lbrace 2\rbrace, \lbrace 3\rbrace, \lbrace 2,3\rbrace \rbrace$ for $E = \lbrace 1,2,3 \rbrace$, $\lbrace 1\rbrace$ is a smallest and $\lbrace 2,3\rbrace$ is a largest basis for $X = E$ and $\frac{\rho(X)}{r(X)} = 1$ for all other $X \subseteq E$ leading to $q(E, \mathcal{F}) = \frac{1}{2}$.
We have $0 < q(E, \mathcal{F}) \leq 1$ in general, and $q(E, \mathcal{F}) = 1$ iff $(E, \mathcal{F})$ is a matroid. Intuitively, the quotient expresses how close an independence system is to being a matroid.

Generalising planar graph duality (see details of instantiation to directed graphs in the next section), we call $(E, \mathcal{F^*})$ the \emph{dual} of an independence system $(E, \mathcal{F})$, where a set $F\in\mathcal{F^*}$ iff there is a basis $B$ of $(E, \mathcal{F})$ s.t.\ $F \cap B = \emptyset$. As a basic property, we have that the dual of the dual $(E, \mathcal{F}^{**})$ is the independence system $(E, \mathcal{F})$ itself. Furthermore, $(E, \mathcal{F}^*)$  is a matroid iff $(E, \mathcal{F})$ is a matroid. For a matroid $(E, \mathcal{F})$ and $F \subseteq E$, $r^*(F) = |F| + r(E \setminus F) - r(E)$, where $r^*$ is the rank of the dual.

\subparagraph*{Formalisation}
Keinholz already formalised independence systems, matroids, bases, circuits, lower ranks and basic properties of these in Isabelle/HOL~\cite{Matroids-AFP} using \emph{locales}.
In Isabelle/HOL, locales~\cite{LocalesBallarin} are named contexts allowing the fixing of constants and the assertion of assumptions on their properties, which will then be available within the locale for definitions and proofs.
Keinholz gives locales \texttt{indep\_system} and \texttt{matroid} shown in Listing~\ref{isabelle:indep_sys_matroid_defs}, each of which fixes a carrier set \texttt{carrier} and an independence predicate \texttt{indep} specifying which subsets of \texttt{carrier} should be independent.
We extend this formalisation with another equivalent definition of the upper rank.
For the rank quotient, we explicitly define the argument of the minimum to be $1$ when $\rho(X) = r(X)$ to treat division by zero and empty carriers.
\begin{figure}[t]
\begin{lstlisting}[
 language=Isabelle,
 caption={Definition of the rank quotient in Isabelle/HOL.},
 label={isabelle:rank_quotient_def},
 captionpos=b,
 numbers=none,
 xleftmargin=0cm,
 columns=flexible,
 linewidth=1.1\linewidth
 ]
definition Frac :: "int \<Rightarrow> int \<Rightarrow> rat" where
  "Frac a b = (if a = b then 1 else Fract a b)"
definition rank_quotient :: "rat" where
  "rank_quotient = Min {Frac (int (lower_rank_of X))  (int (rank X)) | X. X \<subseteq> carrier}"
\end{lstlisting}
\end{figure}

\begin{figure}[t]
\begin{lstlisting}[
 language=Isabelle,
 caption={Two locales with assumptions characterising independent systems and matroids.},
 label={isabelle:indep_sys_matroid_defs},
 captionpos=b,
 numbers=none,
 xleftmargin=0cm,
 columns=flexible
 ]
locale indep_system =
  fixes carrier :: "'a set" fixes indep :: "'a set \<Rightarrow> bool"
  assumes carrier_finite: "finite carrier"
  assumes indep_subset_carrier: "indep X \<Longrightarrow> X \<subseteq> carrier"
  assumes indep_ex: "\<exists>X. indep X"
  assumes indep_subset: "indep X \<Longrightarrow> Y \<subseteq> X \<Longrightarrow> indep Y"

locale matroid = indep_system +
  assumes augment_aux:
    "indep X \<Longrightarrow> indep Y \<Longrightarrow> card X = Suc (card Y) \<Longrightarrow> \<exists> x ∈ X - Y. indep (insert x Y)"
\end{lstlisting}
\end{figure}
\savespace{\vspace{-3ex}}

\subparagraph*{Verification Methodology}
As mentioned earlier, a goal of this work is to formalise some of the theory of matroids and greedoids and then obtain executable algorithms to process them.
We follow Nipkow's~\cite{fdsBook} approach of using locales to model \textit{abstract data types (ADTs)} and to implement data structures such as finite sets.
The functions and assumptions of an ADT can be encapsulated into a locale.
In that approach, one specifies \textit{invariants}, which give the conditions for the data structure to be well-formed, and \textit{abstraction functions}, which convert an instance of the data structure into a value of the abstract mathematical type which it represents (see Listing~\ref{isabelle:custom_set_specs}, for instance).
Implementing an algorithm defined using ADTs can be done by instantiating the ADTs with executable implementations.
The approach thus allows correctness guarantees and a lot of the proofs to be done at an abstract mathematical level, while executable implementations are obtained.
%% Converting the proof obligations defined on the implementation functions into proof obligations defined on the abstractions of the ADT implementations allows us to reuse the abstract mathematical theory to prove the necessary properties, obtaining an algorithm which has both verified correctness guarantees and is executable.
%% We can obtain an executable implementation of a data structure, with formal correctness guarantees which are stated and proved using the abstraction functions.

To model iteration, we follow Abdulaziz's~\cite{fdsBook} approach whereby algorithms involving for- or while-loops are specified in Isabelle/HOL as recursive functions manipulating values of algorithm states, which are modelled as records with fields corresponding to the variables which change as the algorithm runs.
The verification of the correctness properties of an algorithm is done using loop invariants, which are shown to hold for the initial state and to be preserved across every distinct execution path of the algorithm.
With a custom-defined induction principle and by providing a termination proof for the function, we can deduce the relevant correctness theorems for the function using the invariant properties.

Our approach has limitations when it comes to the efficiency of executable code, as it cannot generate imperative code, it strikes a good balance of obtaining practical verified implementations for algorithms with deep background mathematical theory.
\savespace{\vspace{-3ex}}

\section{Greedy Algorithm for Matroids}
In order to implement executable algorithms on independence systems and matroids in Isabelle/HOL, we first provide a specification of these structures that allows for executable implementations.
Since the locales \texttt{indep\_system} and \texttt{matroid} work with 'mathematical' sets of the type \texttt{'a set}, we specify as a locale an ADT for sets which can be abstracted to a set of type \texttt{'a set}, which would allow for executability later on after instantiation.
For this, we specify an ADT using a locale extending an existing ADT (\texttt{Set2}) for binary set operations with the function \texttt{cardinality}.
\begin{figure}[t]
\begin{lstlisting}[
 language=Isabelle,
 caption={An ADT for binary set operations: a set implementation is assumed to be of type \texttt{'s}.},
 label={isabelle:custom_set_specs},
 captionpos=b,
 numbers=none,
 xleftmargin=0cm,
 columns=flexible
 ]
locale Card_Set2 = Set2 +
  fixes cardinality :: "'s \<Rightarrow> nat"
  assumes nonempty_repr: "invar X \<Longrightarrow> X \<noteq> empty \<Longrightarrow> set X \<noteq> set empty"
  assumes set_cardinality: "invar X \<Longrightarrow> cardinality X = card (set X)"
\end{lstlisting}
\end{figure}
%{\color{red}Additionally, the function \texttt{subseteq} was implemented in terms of the function \texttt{diff} from the locale \texttt{Set2}.}
We then define the locale \texttt{Indep\_System\_Specs}, which includes the specification for \texttt{Card\_Set2}.
%\texttt{'s} is the implementation type of sets, whereas \texttt{'t} is the type of the implementation of the independence set.
Within this locale, an independence system will be represented by a carrier set \texttt{carrier} of type \texttt{'s}, and an independence function \texttt{indep\_fn} of type \texttt{'s $\Rightarrow$ bool}. \texttt{indep\_fn} is an \textit{independence oracle}, that is, it takes a set \texttt{X} as input and returns a Boolean value indicating whether \texttt{X} is independent w.r.t.\ the matroid.

\texttt{Indep\_System\_Specs} defines a predicate \texttt{indep\_system\_axioms} which gives the independence system properties for \texttt{carrier} and \texttt{indep\_fn}, in terms of the well-formedness invariant and the set operations from \texttt{Card\_Set2}.
We also define two abstraction functions \texttt{carrier\_abs} and \texttt{indep\_abs} which convert \texttt{carrier} and \texttt{indep\_fn} to their abstract counterparts of type \texttt{'a set} and \texttt{'a set $\Rightarrow$ bool}, respectively.
We prove that if the invariants hold, then \texttt{indep\_system\_axioms carrier indep\_fn} is equivalent to the abstract independence system properties for the abstractions of \texttt{carrier} and \texttt{indep\_fn}, which allows us to reuse the theory on abstract independence systems for the implementation.
We also assume an invariant that states, among other things, that the independence function must return the same value for any two implementation sets that have the same  abstraction, which is non-vacuous as one abstract set can have multiple different implementations representing it.
This is needed to show that the implementation and abstraction of the independence functions are equivalent.
Finally, we define a locale \texttt{Matroid\_Specs} which has the same specification as \texttt{Indep\_System\_Specs} and contains the additional, analogous predicates and lemmas for matroids.

\begin{figure}[t]
\begin{lstlisting}[
 language=Isabelle,
 caption={A locale specifying an ADT for independence systems.},
 label={isabelle:indep_sys_specs},
 captionpos=b,
 numbers=none,
 xleftmargin=0cm,
 columns=flexible
 ]
locale Indep_System_Specs = set: Card_Set2 + ...
begin
definition indep_system_axioms where
  "indep_system_axioms carrier indep_fn =
    ((\<forall>X. set_inv X \<longrightarrow> indep_fn X \<longrightarrow> subseteq X carrier) \<and>
    (\<exists>X. set_inv X \<and> indep_fn X) \<and>
    (\<forall>X Y. set_inv X \<longrightarrow> set_inv Y \<longrightarrow> indep_fn X \<longrightarrow> subseteq Y X \<longrightarrow> indep_fn Y))"
...
lemma indep_system_abs_equiv:
  "indep_system_axioms carrier indep_fn = 
     indep_system (carrier_abs carrier) (indep_abs indep_fn)"
\end{lstlisting}
\end{figure}

\subsection{Specification of the Best-In-Greedy Algorithm}

\begin{algorithm}[t]
\DontPrintSemicolon 
\SetKwInOut{Input}{Input}
\SetKwInOut{Output}{Output}
% \Input{Independence System $(E, \mathcal{F})$ and valid modular weight function $c$}
% \Output{$F \in \mathcal{F}$}
Sort $E := \{e_1, \ldots, e_n\}$ such that \(c(e_1) \geq c(e_2) \geq \ldots \geq c(e_n)\);{\color{gray}//sort the carrier set}\;
$F \gets \emptyset$; {\color{gray}//initialise result}\;
\For{$i := 1$ to $n$}{
    \textrm{\color{gray}//check if $F$ stays independent after adding $e_i$, add $e_i$ if it stays independent}\;
    \text{\bf if} {$F \cup \{e_i\} \in \mathcal{F}$} \text{\bf then} {$F \gets F \cup \{e_i\}$;}
}
\Return $F$;\;
\caption{\label{algorithm:best_in_greedy_algo}\label{algorithm:best_in_greedy_algo}\textsc{BestInGreedy}$(E, \mathcal{F}, c)$}
\end{algorithm}

The Best-In-Greedy algorithm is used to solve the maximisation problem on independence systems.
Here, we consider an independence system \((E, \mathcal{F})\) and a nonnegative cost function \(c : E \mapsto \mathbb{R}_{+}\), and want to find a set $X \in \mathcal{F}$ which maximises the total cost $c(X) := \sum_{e \in X} c(e)$.
For the greedy algorithm, we assume the existence of an independence oracle, which given a set $F \subseteq E$, decides whether $F \in \mathcal{F}$ or not. 
The pseudocode of the Best-In-Greedy algorithm is shown in Algorithm~\ref{algorithm:best_in_greedy_algo}.

In order to implement and verify the Best-In-Greedy algorithm, we define a locale \texttt{Best\_In\_Greedy}.
It assumes that its input matroid satisfies \texttt{Matroid\_Specs}.
The elements of the carrier are assumed to be of type \texttt{'a}. Furthermore, the locale fixes a carrier set \texttt{carrier} of type \texttt{'s}, an independence function \texttt{indep\_fn} of type \texttt{'s $\Rightarrow$ bool} and a sorting function \texttt{sort\_desc} of type \texttt{('a $\Rightarrow$ rat) $\Rightarrow$ 'a list $\Rightarrow$ 'a list} which sorts the input list in descending order using the input function as a key.

We define the state for the greedy algorithm as a record type, where \texttt{carrier\_list} is a list of type \texttt{'a list} consisting of the elements of the carrier set (corresponding to $E$) and \texttt{result} is the result set of type \texttt{'s} which is constructed over the course of the algorithm (corresponding to $F$).
With this state type, we can specify the algorithm as a function in Isabelle/HOL as shown in Listing~\ref{isabelle:best_in_greedy_def}.

%\begin{figure}[t]
%\begin{lstlisting}[
% language=Isabelle,
% caption={Best-In-Greedy algorithm in Isabelle/HOL and its initial state. Note: for a record \texttt{r}, \texttt{r} $\llparenthesis$ \texttt{x:= v}$\rrparenthesis$ is the same as \texttt{r}, except with the value of \texttt{x} set to \texttt{v}.
%},
% label={isabelle:best_in_greedy_def},
% captionpos=b,
% numbers=none,
% xleftmargin=0cm,
% columns=flexible
% ]
%function BestInGreedy :: "('a, 's) best_in_greedy_state
%  \<Rightarrow> ('a, 's) best_in_greedy_state" where
%  "BestInGreedy state = 
%    (case (carrier_list state) of [] \<Rightarrow> state
%    | (x # xs) \<Rightarrow>  (if indep' (set_insert x (result state)) then
%                    let new_result = (set_insert x (result state)) in
%                   BestInGreedy (state \<lparr>carrier_list := xs, result := new_result\<rparr>)
%                    else BestInGreedy (state \<lparr>carrier_list := xs\<rparr>)))"
%
%definition "initial_state c order =
%  \<lparr>carrier_list = (sort_desc c order), result = set_empty\<rparr>"
%\end{lstlisting}
%\end{figure}
The recursive function \texttt{BestInGreedy} implements the loop in the pseudocode.
It goes recursively through the sorted carrier list, adding elements to the constructed solution as appropriate.
The algorithm requires a cost function \texttt{c} and a list \texttt{order} containing the elements of the carrier set in an initial arbitrary ordering.
These two parameters are not fixed in the locale since we need to quantify over them in the subsequent correctness theorems (both universally and existentially).
We require an explicit initial ordering of the elements for the proof of one of the correctness theorems.
The initial state of the Best-In-Greedy algorithm is defined by setting \texttt{carrier\_list} to \texttt{sort\_desc c order} and \texttt{result} to the empty set.
%The Best-In-Greedy algorithm can therefore be run by executing the function \texttt{BestInGreedy} on the expression \texttt{initial-state c order}, for a given cost function and initial ordering.

\subsection{Verification of the Best-In-Greedy Algorithm}

We now describe some of the important aspects of the verification of the Best-In-Greedy algorithm in Isabelle/HOL.
%We verify the correctness theorems of the Best-In-Greedy algorithm for independence systems within the context of the \texttt{Best-In-Greedy} locale.
Several of the definitions used in the verification of the Best-In-Greedy algorithm are parametrised by \texttt{c} and \texttt{order}. Whenever these terms appear in a theorem, we assume that \texttt{c} is nonnegative and that the elements of \texttt{order} correspond exactly to those in the carrier set. Additionally, for all the correctness theorems on the greedy algorithm, we assume the predicates \texttt{BestInGreedy\_axioms}, \texttt{sort\_desc\_axioms} and \texttt{matroid.indep\_system\_axioms carrier indep\_fn}. 

\begin{figure}[t]
\begin{lstlisting}[
 language=Isabelle,
 caption={Best-In-Greedy algorithm in Isabelle/HOL and its initial state. Note: for a record \texttt{r}, \texttt{r} $\llparenthesis$ \texttt{x:= v}$\rrparenthesis$ is the same as \texttt{r}, except with the value of \texttt{x} set to \texttt{v}.
},
 label={isabelle:best_in_greedy_def},
 captionpos=b,
 numbers=none,
 xleftmargin=0cm,
 columns=flexible
 ]
function BestInGreedy :: "('a, 's) best_in_greedy_state
  \<Rightarrow> ('a, 's) best_in_greedy_state" where
  "BestInGreedy state = 
    (case (carrier_list state) of [] \<Rightarrow> state
    | (x # xs) \<Rightarrow>  (if indep' (set_insert x (result state)) then
                      let new_result = (set_insert x (result state)) in
                        BestInGreedy (state \<lparr>carrier_list := xs, result := new_result\<rparr>)
                    else BestInGreedy (state \<lparr>carrier_list := xs\<rparr>)))"

definition "initial_state c order =
  \<lparr>carrier_list = (sort_desc c order), result = set_empty\<rparr>"
\end{lstlisting}
\end{figure}

\texttt{BestInGreedy\_axioms} contains the invariants from the locale \texttt{Matroid\_Specs}, whereas \texttt{sort\_desc\_axioms} states that the function \texttt{sort\_desc} sorts the input list in non-increasing order according to the input cost function, and that the sort is stable.

The first important correctness theorem for the Best-In-Greedy algorithm is Lemma~\ref{lemma:best_in_greedy_1}. It states that for any nonnegative cost function $c$ and any $X$ which is a valid solution to the maximisation problem, the cost of the Best-In-Greedy solution is greater than or equal to the rank quotient of the independence system times the cost of solution $X$.

\begin{lem}[Best-In-Greedy cost bound~\cite{Jenkyns1976, Korte1978AnAO}\label{lemma:best_in_greedy_1}]
Let $(E, \mathcal{F})$ be an independence system, with $c : E \to \mathbb{R}_{+}$.
Let $G$ be the output of \textsc{BestInGreedy}.
Then \(c(G) \geq q(E, \mathcal{F}) \cdot c(X)\) for all $X \in \mathcal{F}$.
\begin{proof}
Let $E := \{e_1, e_2, \ldots, e_n\}$, $c : E \to \mathbb{R}_{+}$, and $c(e_1) \geq c(e_2) \geq \ldots \geq c(e_n)$. Let $G_n$ be the final solution found by \textsc{BestInGreedy} assuming that the elements are sorted in the given order, and let $X_n$ be an arbitrary solution. Define $E_j := \{e_1, \ldots, e_j\}$, $G_j := G_n \cap E_j$ and $X_j := X_n \cap E_j$ for $j = 0, \ldots, n$. Set $d_n: = c(e_n)$ and $d_j := c(e_j) - c(e_{j + 1})$ for $j = 1, \ldots, n - 1$.

Since $X_j \in \mathcal{F}$, we have $|X_j| \leq r(E_j)$. Since $G_j$ is a basis of $E_j$, we have $|G_j| \geq \rho(E_j)$. Together with the definition of the rank quotient, we can conclude that
\begin{align*}
c(G_n) &= \sum_{j = 1}^{n} (|G_j| - |G_{j - 1}|) c(e_j)\\
&= \sum_{j = 1}^{n} |G_j| d_j \geq \sum_{j = 1}^{n} \rho(E_j) d_j \geq q(E, \mathcal{F}) \sum_{j = 1}^{n} r(E_j) d_j \geq q(E, \mathcal{F}) \sum_{j = 1}^{n} |X_j| d_j\\
& = q(E, \mathcal{F}) \sum_{j = 1}^{n} (|X_j| - |X_{j - 1}|) c(e_j) = q(E, \mathcal{F}) c(X_n).
\end{align*}
\end{proof}
\end{lem}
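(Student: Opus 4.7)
The plan is to follow the classical Jenkyns–Korte argument based on summation by parts (Abel's identity). After sorting $E$ so that $c(e_1) \geq \ldots \geq c(e_n)$, I would set $E_j := \{e_1,\ldots,e_j\}$, $G_j := G \cap E_j$, and $X_j := X \cap E_j$, and introduce the nonnegative differences $d_j := c(e_j) - c(e_{j+1})$ for $j < n$ and $d_n := c(e_n)$. Abel's identity then rewrites $c(G_n) = \sum_{j=1}^n (|G_j| - |G_{j-1}|) c(e_j) = \sum_{j=1}^n |G_j|\, d_j$, and analogously $c(X_n) = \sum_{j=1}^n |X_j|\, d_j$, reducing the problem to comparing $|G_j|$ and $|X_j|$ with the ranks of $E_j$ under the weighting $d_j$.

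The decisive structural fact is that $G_j$ is a basis of $E_j$ for every $j$. I would establish this as a loop invariant of \texttt{BestInGreedy}: after the recursive call has consumed the prefix $e_1,\ldots,e_j$ of \texttt{carrier\_list}, the \texttt{result} field equals $G_j$, is independent, and no element of $E_j$ can be added to it without creating a dependent set. The first two clauses follow directly from the recursive structure of the function, and the maximality clause holds because each $e_i$ with $i \leq j$ was either added or rejected precisely when $G_{i-1}\cup\{e_i\}$ was dependent, and by (M2) together with $G_{i-1}\subseteq G_j$ it remains dependent inside any extension. Combined with $|X_j|\leq r(E_j)$ (since $X_j \in \mathcal{F}$) and $|G_j| \geq \rho(E_j)$ (since $G_j$ is a basis of $E_j$), the definition of $q(E,\mathcal F)$ yields $c(G_n) = \sum_j |G_j|\, d_j \geq \sum_j \rho(E_j)\, d_j \geq q(E,\mathcal F)\sum_j r(E_j)\, d_j \geq q(E,\mathcal F)\sum_j |X_j|\, d_j = q(E,\mathcal F)\, c(X_n)$, where the nonnegativity of the $d_j$ (guaranteed by the descending sort and by the assumption that $c$ is nonnegative) is essential for pushing the rank quotient through the sums.

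The main obstacle will be the loop-invariant proof that $G_j$ is a basis of $E_j$: it must be phrased carefully in terms of the state record, relating the consumed prefix of \texttt{carrier\_list} to $E_j$, using \texttt{sort\_desc\_axioms} (in particular the stability of the sort) to identify that prefix precisely, and mediating between implementation sets of type \texttt{'s} and their abstractions of type \texttt{'a set} via the \texttt{Card\_Set2} invariants. Once that invariant is in place, the Abel summation and the four inequalities are routine, though some care is needed at the boundary $j=0$ with natural-number subtraction and with the special-cased value $1$ that \texttt{rank\_quotient} assigns when $\rho(X) = r(X)$ so that the step $\sum_j \rho(E_j)\, d_j \geq q(E,\mathcal F)\sum_j r(E_j)\, d_j$ remains valid term-by-term.
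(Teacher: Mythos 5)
Your proposal is correct and follows essentially the same route as the paper's proof: the same prefix sets $E_j$, $G_j$, $X_j$, the same differences $d_j$, the same Abel summation, and the same chain of inequalities via $|G_j|\geq\rho(E_j)$ and $|X_j|\leq r(E_j)$; you also correctly identify the key formal obstacle (establishing that $G_j$ is a basis of $E_j$ as a loop invariant), which is exactly the role of \texttt{invar\_4} in the formalisation. Nothing is missing.
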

Within the appropriate context in the locale \texttt{Best\_In\_Greedy} (which assumes the three axiom predicates described above and that \texttt{c} and \texttt{order} are valid), the lemma can be formulated as in Listing~\ref{isabelle:best_in_greedy_lemma_1} in Isabelle/HOL. Here, \texttt{c\_set c S} denotes the sum of the costs of the elements in a set \texttt{S}, where the cost function is \texttt{c}.
\begin{figure}[t]
\begin{lstlisting}[
 language=Isabelle,
 caption={The first correctness lemma for the Best-In-Greedy algorithm. \texttt{to\_set} is the abstraction function for the set ADT.},
 label={isabelle:best_in_greedy_lemma_1},
 captionpos=b,
 numbers=none,
 xleftmargin=0cm,
 columns=flexible
 ]
lemma BestInGreedy_correct_2:
  "valid_solution X \<Longrightarrow>
    c_set c (to_set (result (BestInGreedy (initial_state c order)))) \<ge> 
    indep_system.rank_quotient (matroid.carrier_abs carrier)
      (matroid.indep_abs indep_fn) * c_set c (to_set X)"
\end{lstlisting}
\end{figure}
% Here, the predicate \texttt{valid-solution} requires that $X$ is a valid set according to the set ADT, that it is a subset of the carrier, and that it is independent.
In our formalisation, the proof of this lemma followed the informal proof, with the main argument of the proof consisting of the chain of inequalities on sums.
%The informal proof defines prefix sets of the carrier and greedy solution. Assuming that the carrier elements are sorted into the order $e_1, \ldots, e_n$, we define: $E_j = \{e_1, \ldots, e_j\}$, and $G_j = G_n \cap E_j$, where $G_n$ is the final greedy solution.
Since we do not explicitly store the current iteration number $j$ in the state, we provide the definition \texttt{num\_iter}, which extracts the iteration number from a given state using the length of the carrier list. The definitions \texttt{carrier\_pref} and \texttt{pref} are used to represent the prefix sets $E_j$, $G_j$ and $X_j$.

A main statement that is often given without further explanation in informal proofs is that for all $j \in \{1, \ldots, n\}$, $G_j$ is a basis of $E_j$. Intuitively, $G_j$ can be seen to be a maximally independent subset of $E_j$ since it is independent by construction, and since no potential candidate elements are skipped during the algorithm. However, in the context of the formalisation, proving this statement requires some more work. In order to be able to use this fact in the proof, we formulate invariant \texttt{invar\_4}, which expresses the desired property.
\begin{figure}[t]
\begin{lstlisting}[
 language=Isabelle,
 caption={The main invariant for the Best-In-Greedy algorithm.},
 label={isabelle:invar_4_def},
 captionpos=b,
 numbers=none,
 xleftmargin=0cm,
 columns=flexible
 ]
definition "invar_4 c order best_in_greedy_state =
  (\<forall>j \<in> {0..(num_iter c order best_in_greedy_state)}.
    (indep_system.basis_in (matroid.indep_abs indep_fn) (carrier_pref c order j)
                            (pref c order (to_set (result best_in_greedy_state)) j)))"
\end{lstlisting}
\end{figure}
% Showing that the invariant holds for the state returned by BestInGreedy allows us to use the relevant property in the chain of inequalities in order to prove the overall statement.
For the invariant preservation proofs for \texttt{invar\_4}, defining some more auxiliary invariants and proving they are preserved across the algorithm is necessary. These are not part of the informal proof, since they are either fairly trivial (e.g. showing that \texttt{result} is always a subset of the current carrier prefix) or aspects specific to the formal proof (e.g. showing that \texttt{result} always satisfies the set ADT well-formedness invariant). 
Proving that \texttt{invar\_4} is preserved across the different execution paths of \texttt{BestInGreedy} then boils down to proving that if \texttt{result} is a basis of the current carrier prefix, the result set after one state update will still be a basis of the carrier prefix in the next step. We consider the two recursive cases of \texttt{BestInGreedy}, in which the current element is either added to \texttt{result} if it preserves independence, or left out otherwise.
The two lemmas required for the invariant proofs in these two cases are formulated and proven in the context of the abstract \texttt{indep\_system} locale.
Through the theorems connecting the implementation of independence systems to the abstract theory of independence systems, we are able to use these theorems to complete the invariant preservation proofs of \texttt{invar\_4}.

The second important correctness theorem on the Best-In-Greedy algorithm is Lemma~\ref{lemma:best_in_greedy_2}, which states that there exists a nonnegative cost function $c$ and a valid solution $X$ for which the bound from Lemma~\ref{lemma:best_in_greedy_1} holds.

\begin{lem}[Best-In-Greedy cost bound tightness~\cite{Jenkyns1976, Korte1978AnAO}\label{lemma:best_in_greedy_2}]
Let $(E, \mathcal{F})$ be an independence system.
There exists a cost function $c : E \to \mathbb{R}_+$ and $X\in\mathcal{F}$ s.t.\ for the output $G$ of \textsc{\textsc{BestInGreedy}}, \(c(G) = q(E, \mathcal{F}) \cdot c(X)\).
\begin{proof}
Choose $F \subseteq E$ and bases $B_1, B_2$ of $F$ such that $\frac{|B_1|}{|B_2|} = q(E, \mathcal{F})$. Define
$c(e) := 1$ for $e \in F$, $c(e) := 0$ for $e \in E \setminus F$ and sort $e_1, \ldots, e_n$ such that $c(e_1) \geq c(e_2) \geq \ldots \geq c(e_n)$ and $B_1 = \{e_1, \ldots, e_{|B_1|}\}$. Then $c(G) = |B_1|$ and $c(X) = |B_2|$, which finishes the proof.
\end{proof}
\end{lem}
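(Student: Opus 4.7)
The plan is to follow the informal proof while producing the explicit witness that the Isabelle statement demands. First, since \texttt{carrier} is finite, the set of fractions $\rho(X)/r(X)$ over $X \subseteq E$ is finite, so the minimum defining $q(E, \mathcal{F})$ is attained at some $F \subseteq E$; by the definitions of upper and lower rank, $F$ has a smallest basis $B_1$ with $|B_1| = \rho(F)$ and a largest basis $B_2$ with $|B_2| = r(F)$. The edge case where $F = \emptyset$ or $|B_1| = |B_2|$ collapses to $q(E, \mathcal{F}) = 1$ via the convention $\mathrm{Frac}\,a\,a = 1$ from the paper's definition, and is handled uniformly by taking $X = B_2$, since the argument below still yields $c(G) = c(X)$.

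Second, define the indicator cost $c(e) := 1$ if $e \in F$ and $c(e) := 0$ otherwise, and construct the initial \texttt{order} as the concatenation of (the elements of) $B_1$, then $F \setminus B_1$, then $E \setminus F$. By \texttt{sort\_desc\_axioms} (non-increasing, stable), after sorting \texttt{order} by $c$ the resulting \texttt{carrier\_list} starts with the elements of $B_1$, followed by $F \setminus B_1$, followed by $E \setminus F$. Both the choice of $F$ (and hence $B_1, B_2$) and this construction of \texttt{order} exist classically by Hilbert choice, and they instantiate the witnesses required by the existential.

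Third, I would trace \texttt{BestInGreedy} on this input by a straightforward induction on \texttt{carrier\_list}. For the first $|B_1|$ iterations every candidate lies in the independent set $B_1$, and by (M2) every prefix of $B_1$ is independent, so the element is inserted and after these iterations \texttt{result} equals $B_1$. For the next $|F| - |B_1|$ iterations the candidate lies in $F \setminus B_1$, and since $B_1$ is a \emph{basis} (i.e.\ maximally independent subset) of $F$, no such candidate can be added. Hence the algorithm's intersection with $F$ is exactly $B_1$; whatever happens on $E \setminus F$ contributes $0$ to $c$. Therefore $c(G) = |B_1|$, while $X := B_2 \in \mathcal{F}$ gives $c(X) = |B_2|$, and the identity $c(G) = q(E, \mathcal{F}) \cdot c(X)$ reduces to $|B_1| \cdot r(F) = \rho(F) \cdot |B_2|$, which holds by construction.

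The main obstacle is the third step: establishing formally that, after processing $B_1$, the algorithm rejects every element of $F \setminus B_1$. Two facts are needed. First, stability of \texttt{sort\_desc} must be combined with the construction of \texttt{order} to guarantee the precise block structure of the sorted list; this is purely list-arithmetic but requires a dedicated lemma about sorting a cost function with only two values. Second, we need a rejection lemma at the abstract independence-system level: if the current \texttt{result} is a basis of the already-processed prefix of \texttt{carrier\_list} (which is exactly the invariant \texttt{invar\_4} we have already proven for Lemma~\ref{lemma:best_in_greedy_1}), then any subsequent element whose addition destroys independence relative to \texttt{result} is skipped. Combining these with the abstraction bridge proved for \texttt{Indep\_System\_Specs} discharges the step. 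A minor but pervasive subtlety is the special \texttt{Frac} case when $\rho(F) = r(F) = 0$, which must be threaded through the final arithmetic to keep the equality literally true on \texttt{rat}.
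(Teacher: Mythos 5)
Your proposal is correct and follows essentially the same route as the paper's formal proof: choose $F$, $B_1$, $B_2$ attaining the rank quotient, use the $0/1$ indicator cost, explicitly construct the initial \texttt{order} with $B_1$ first and exploit stability of \texttt{sort\_desc} to get the block structure, then prove via loop invariants that the greedy result meets $F$ exactly in $B_1$ (the paper likewise introduces a new invariant stating that no element of $F \setminus B_1$ ever enters the result). The edge-case handling of the \texttt{Frac} convention and the finite-minimum attainment are sensible additions that the paper leaves implicit.
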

In our formalisation, this lemma is stated with an additional existential quantifier for the \texttt{order} parameter. The proof in our formalisation proceeds similarly to the informal proof we follow, except that we explicitly construct the initial order list and use the stability of the sorting function to show that the sorting of the elements produces the ordering necessary for the proof.
Additionally, proving that the greedy result is $B_1$ requires defining some new loop invariants (for example that no element of $F - B_1$ is ever in the greedy result) and showing that they are preserved across the algorithm.
\begin{figure}[t]
\begin{lstlisting}[
 language=Isabelle,
 caption={The second correctness lemma for the Best-in-Greedy algorithm.},
 label={isabelle:best_in_greedy_bound_tight},
 captionpos=b,
 numbers=none,
 xleftmargin=0cm,
 columns=flexible
 ]
lemma BestInGreedy_bound_tight:
  "(\<exists>c. nonnegative c \<and> (\<exists>order. valid_order order \<and> (\<exists>X. valid_solution X \<and> c_set c
    (to_set (result (BestInGreedy (initial_state c order)))) = 
    indep_system.rank_quotient (matroid.carrier_abs carrier)
    (matroid.indep_abs indep_fn) * c_set c (to_set X))))"
\end{lstlisting}
\end{figure}

The final correctness theorem on the Best-In-Greedy algorithm concerns the relationship between the performance of the algorithm and matroids:
\begin{thm}[Characterisation of matroids, Edmonds-Rado~\cite{Rado1957, EdmondsMatroids1971}\label{lemma:best_in_greedy_3}]
An independence system $(E, \mathcal{F})$ is a matroid if and only if \textsc{BestInGreedy} finds an optimal solution for the maximisation problem for $(E, \mathcal{F}, c)$ for all cost functions $c : E \to \mathbb{R}_+$.
\end{thm}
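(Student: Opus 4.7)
The plan is to prove the two directions separately, using the two previously established lemmas about the Best-In-Greedy algorithm together with the background fact, mentioned earlier in the paper, that an independence system $(E,\mathcal{F})$ is a matroid if and only if $q(E,\mathcal{F}) = 1$.

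For the forward direction (matroid implies greedy optimality), I would assume that $(E, \mathcal{F})$ is a matroid and fix an arbitrary nonnegative cost function $c$ and an arbitrary valid solution $X \in \mathcal{F}$. Since the structure is a matroid, $q(E, \mathcal{F}) = 1$. Applying Lemma~\ref{lemma:best_in_greedy_1} with this $c$ and $X$ gives $c(G) \geq q(E, \mathcal{F}) \cdot c(X) = c(X)$, where $G$ is the greedy output. As $X$ was arbitrary, $G$ maximises the cost, which is exactly the desired optimality.

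For the backward direction I would use the contrapositive: assume $(E, \mathcal{F})$ is not a matroid and exhibit a cost function witnessing that the greedy algorithm is not optimal. The non-matroid assumption gives $q(E, \mathcal{F}) < 1$, and since the rank quotient is always strictly positive on an independence system, $0 < q(E, \mathcal{F}) < 1$. By Lemma~\ref{lemma:best_in_greedy_2}, there exist a nonnegative $c$, an initial ordering, and an $X \in \mathcal{F}$ such that $c(G) = q(E, \mathcal{F}) \cdot c(X)$ for the greedy output $G$. To conclude that greedy is suboptimal I need $c(X) > 0$: inspecting the construction in the proof of Lemma~\ref{lemma:best_in_greedy_2}, we have $c(X) = |B_2|$, and $|B_2| \geq |B_1| \geq 1$ because $|B_1|/|B_2| = q(E,\mathcal{F}) > 0$. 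Hence $c(G) = q(E, \mathcal{F}) \cdot c(X) < c(X)$, so the greedy algorithm does not produce an optimal solution on this instance, contradicting the hypothesis.

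On the formalisation side, the main obstacle I expect is not the mathematical content, which is essentially a two-line deduction from the preceding lemmas, but rather the bookkeeping needed to wire these lemmas together through the abstraction layers. Concretely, I need to (i) invoke the previously formalised characterisation that \texttt{matroid.indep\_system\_axioms} combined with $q = 1$ is equivalent to \texttt{matroid.matroid\_axioms}, phrased over the abstractions \texttt{matroid.carrier\_abs carrier} and \texttt{matroid.indep\_abs indep\_fn}; (ii) translate the existential statement from \texttt{BestInGreedy\_bound\_tight} so that the witness cost function, initial ordering, and $X$ appear at the right quantifier depth to instantiate the universally quantified optimality assumption; and (iii) discharge the side condition $c(X) > 0$ from the construction, most conveniently by extracting from Lemma~\ref{lemma:best_in_greedy_2} the additional information that $c$ is an indicator function of a set with a smaller basis of positive size, or by proving a strengthened version of that lemma that records $c(X) = |B_2| \geq 1$ explicitly. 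With these pieces in place, the theorem in the Isabelle locale \texttt{Best\_In\_Greedy} follows by a routine case split on whether $q(E, \mathcal{F}) = 1$.
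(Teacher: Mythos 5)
Your proposal is correct and follows essentially the same route as the paper's proof: both directions are obtained by combining Lemma~\ref{lemma:best_in_greedy_1} and Lemma~\ref{lemma:best_in_greedy_2} with the fact that $q(E,\mathcal{F})=1$ iff $(E,\mathcal{F})$ is a matroid. Your explicit attention to the side condition $c(X)>0$ (via $c(X)=|B_2|\geq 1$) is a detail the paper's two-line proof leaves implicit, but it is the same argument.
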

\begin{proof}
By Lemma~\ref{lemma:best_in_greedy_1} and Lemma~\ref{lemma:best_in_greedy_2} we have $q(E, \mathcal{F}) < 1$ if and only if there exists a cost function $c : E \to \mathbb{R}_{+}$ for which \textsc{BestInGreedy} does not find an optimum solution. Using the fact that $q(E, \mathcal{F}) < 1$ if and only if $(E, \mathcal{F})$ is not a matroid completes the proof.
\end{proof}
%The proof of this lemma is a straightforward combination of the previous two correctness lemmas and the theorem on the rank quotient. \adem{TODO maybe sentence this is important characterisation of matroids!}

\subparagraph*{Instantiation for Spanning Forests and Oracles} 
For an undirected graph $G$ with a set of edges $E$, a \textit{spanning tree} $T$ is an acyclic subgraph connecting all vertices of $G$ to one another. A \textit{spanning forest} is acyclic and connects all vertices that are connected via $E$.

The set of edges $E$ in undirected graphs forms a matroid under acyclicity, i.e.\ where those sets of edges that form acyclic subgraphs are independent. This is a so-called \textit{graphic matroid} and it is a widely cited example of matroids ~\cite{KorteVygenOptimisation,schrijverBook}. Acyclicity of sets $X \subseteq E$ indeed satisfies the matroid axioms and a circuit would be a cycle in the ordinary sense. 
We formalise this in Isabelle/HOL, including the equivalence of being a basis w.r.t.\ acyclicity and being a spanning forest, making the greedy algorithm suitable to solve the maximum spanning forest problem correctly.
Note that minimum spanning trees can be easily computed using this algorithm by adapting the input.

The instantiation of the Best-In-Greedy algorithm for graph matroids is also known as \textit{Kruskal's Algorithm}~\cite{fc0df122-3305-33a8-bac5-7a6fc3666dfb}. An independence oracle could check for the absence of cycles by a modified Depth-First Search (DFS). Since the independence of the current solution $X$ is maintained as an invariant, it is enough to check that a new element $x$ does not lead to circuits in $X \cup \lbrace x \rbrace$, i.e.\ a new edge $e$ does not add a cycle in the case of the graphical matroid.
We call this a \textit{weak oracle}. It would be simpler to implement and to verify, and it might even allow for running time improvements e.g.\ if available, by using a union-find structure to store connected components of $X$.

We therefore have an extended locale that also specifies the behaviour of a weak oracle, which is that if (a) $X$ is independent and if (b) suitable data structure and auxiliary invariants are satisfied and (c) $x \not \in X$, then the oracle determines whether $X \cup \lbrace x \rbrace$ is independent. 
Subsequently, we verify the greedy algorithm with a weak oracle by proving it equivalent to the first version when called on the empty set.
As a result, correctness can be lifted.
In the end, we obtain an executable algorithm for maximum spanning forests using a simple DFS as weak oracle. A simplified version of the correctness theorem for Kruskal's Algorithm is stated in Listing~\ref{isabelle:kruskal_corr}. {An informal analysis yields that this implementation has a running time of $\mathcal{O}(n^2 \cdot \log^2 n)$. $\mathcal{O}(|E|\cdot \iota)$ is the running time of the abstract algorithm where $\iota$ is the running time of the oracle.}
\begin{figure}
\begin{lstlisting}[
 language=Isabelle,
 caption={Kruskal's algorithm's correctness (simplified).},
 label={isabelle:kruskal_corr},
 captionpos=b,
 numbers=none,
 xleftmargin=0cm,
 columns=flexible
 ]
definition "max_forest X =  (is_spanning_forest(t_set input_G) X \<and> 
    (\<forall> Y. is_spanning_forest (t_set input_G) Y \<longrightarrow> sum c Y \<le> sum c X))"

corollary kruskal_computes_max_spanning_forest:
"max_forest(t_set (result (kruskal input_G (kruskal_init c order))))"
\end{lstlisting}
\end{figure}

\subsection{Greedy Algorithm for Greedoids}

Greedoids are a generalisation of matroids.
Their definition (conditions (M1) and (M3)) is obtained by dropping the condition (M2) from the definition of matroids. 
Similar to matroids, we use a locale to define greedoids, fixing the carrier set, its family of independent subsets, and specifying the axioms of greedoids.
Their intuition is derived from the edge set of undirected trees containing a fixed vertex $r$ of an undirected graph and its total set of edges.
Due to space limits, we briefly survey results on greedoids in our library without expanding on proofs.
Interested readers may refer to the formalisation instead for details. 

We introduce notions of \textit{accessible set systems}, which are set systems in which every independent set contains an element removing which the resulting set continues to be independent, and \textit{antimatroids}, which are accessible set systems that are closed under union.
We formalise two fundamental results~\cite{10.1137/0605024,KorteVygenOptimisation} that are necessary to verify the greedy algorithm  (Theorem~\ref{greedoid:characterisation}), which is the main result in our library on greedoids.
Firstly, given an accessible set system $(E, \, \mathcal{F})$, for $X \, \in \, \mathcal{F}$, $|X| \, = \, k$, there exists an order $x_1, x_2, \, \ldots \, x_k$ for elements of $X$ such that $\forall \, i \, \leq \, k, \, \left\{ x_1, \ldots \, x_i \right\} \, \in \, \mathcal{F}$.
%It is proven formally and informally by applying strong induction on the cardinality of a fixed set $X \in \mathcal{F} \setminus \{ \emptyset\}$. 
%The inductive step is proven by applying accessibility to the induction hypothesis.
Secondly, a greedoid $(E, \mathcal{F})$ is accessible, as stated in Section 14.1 of \cite{KorteVygenOptimisation}. 
%This statement is proven and formalised for a fixed $X \in \mathcal{F} \setminus \{\emptyset\}$ by proving the following claim by induction: $\forall i, \, 0 \leq i < |X|, \, \exists Y, \, Z \text{ and } x \in X \text{ such that } Y, Z \in \mathcal{F}, \, |Y| = i + 1, \, |Z| = i \text{ and } Y - Z = \{x\}$. The inductive step is proven by applying (M3) on the induction hypothesis.
Thirdly, we prove that every antimatroid is a greedoid~\cite{KorteVygenOptimisation}.
%We formalise set systems anew because the existing formalisation of independence systems~\cite{Matroids-AFP} uses a finite carrier set along with the independence predicate.
%Their formalisation was designed to be extended to matroids using the augmentation property, which would not have worked for greedoids.

We also consider the optimisation of \textit{modular weight functions} $c: 2^E \rightarrow \mathbb{R}$ on greedoids.
Modular weight functions satisfy $c(A \cup B) = c(A) + c(B) - c(A \cap B)$ for all $A, B \subseteq E$.
The algorithm \textsc{GreedoidGreedy} (Listing~\ref{isabelle:greedoid}) keeps track of a current solution $X \in \mathcal{F}$, initially $\emptyset$.
In every iteration, it searches $E$ in the order $e_1,...,e_n$ for a \emph{candidate} $x$ s.t.\ $x \not \in X$ and $X \cup \lbrace x \rbrace \in \mathcal{F}$. It takes the first candidate in $e_1,...,e_n$ such that $c(\lbrace x\rbrace) \geq c(\lbrace y \rbrace)$ for all other candidates $y$. If there is no candidate $x$, the procedure terminates. Otherwise the first best candidate is inserted into $X$, followed by the next iteration. 
Similar to Theorem~\ref{lemma:best_in_greedy_3}, one can characterise certain greedoids with that algorithm.

\begin{thm}
    [Characterisation of Strong-Exchange Greedoids ~\cite{10.1137/0605024}\label{greedoid:characterisation}] 
We say that a greedoid $(E, \mathcal{F})$ has the strong exchange property (SEP) iff for all $A, \, B \, \in \, \mathcal{F}$, $B$ basis w.r.t $\mathcal{F}$, $A \subseteq B$ and $x \, \in E \setminus B$ with $A \cup \lbrace x  \rbrace  \in  \mathcal{F}$, there is $y$ with $A \cup \lbrace y \rbrace \in  \mathcal{F}$ and $(B - \lbrace y \rbrace) \cup \lbrace x \rbrace \in \mathcal{F}$.
 \textsc{GreedoidGreedy} computes a maximum-weight basis in $\mathcal{F}$ for any order of iteration $e_1,...,e_n$ and any modular cost function $c: 2^E \rightarrow \mathbb{R}$ iff $(E, \mathcal{F})$ has the SEP. 
\end{thm}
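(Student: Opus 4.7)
The plan is to prove the two directions of the biconditional separately. The forward direction (SEP implies greedy optimality) follows a classical exchange argument, while the backward direction (greedy always optimal implies SEP) proceeds contrapositively, by constructing a cost and ordering that defeat greedy whenever SEP fails.

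For the forward direction, let $G = (g_1, \ldots, g_k)$ be the greedy output in insertion order, and $B^\ast$ an arbitrary basis. I would inductively construct bases $B_0 = B^\ast, B_1, \ldots, B_k$ with $\{g_1, \ldots, g_i\} \subseteq B_i$ and $c(B_i) \geq c(B_{i-1})$. At step $i$, if $g_i \in B_{i-1}$ take $B_i := B_{i-1}$; otherwise apply SEP with $A := \{g_1, \ldots, g_{i-1}\}$, $B := B_{i-1}$, $x := g_i$ to obtain $y \in B_{i-1} \setminus A$ with $A \cup \{y\} \in \mathcal{F}$ and $(B_{i-1} \setminus \{y\}) \cup \{g_i\} \in \mathcal{F}$, and set $B_i := (B_{i-1} \setminus \{y\}) \cup \{g_i\}$. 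Since all bases of a greedoid share the same cardinality (an immediate consequence of M3 applied to two hypothetically distinct-sized bases), $B_i$ is again a basis. By modularity, $c(B_i) - c(B_{i-1}) = c(\{g_i\}) - c(\{y\})$; because $y$ was a legal candidate at greedy's $i$th iteration (as $y \notin A$ and $A \cup \{y\} \in \mathcal{F}$), greedy's selection rule forces $c(\{g_i\}) \geq c(\{y\})$. After $k$ steps, $B_k \supseteq G$ with equal cardinality, so $B_k = G$ and $c(G) \geq c(B^\ast)$.

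For the backward direction, I would argue contrapositively. Given a SEP violation, pick $(A, B, x)$ minimising $|B \setminus A|$; a quick check shows $|B \setminus A| \geq 2$, as $|B \setminus A| = 1$ would force the unique $y \in B \setminus A$ to satisfy SEP. Define the cost
\[
c(\{a\}) := L \text{ for } a \in A, \quad c(\{x\}) := c(\{y\}) := 1 \text{ for } y \in B \setminus A, \quad c(\{e\}) := -N \text{ for } e \notin B \cup \{x\},
\]
with $L$ and $N$ chosen sufficiently large, and order the carrier as $A$ (in an accessibility order for $A$), then $x$, then $B \setminus A$, then the rest. A trace of greedy shows it first picks $A$ (cost $L$ dominates), then at iteration $|A|+1$ breaks the tie between $x$ and any viable $y \in B \setminus A$ in favour of $x$, which precedes $B \setminus A$ in the ordering. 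After this, any basis $G$ of cost at least $c(B)$ must satisfy $G \subseteq B \cup \{x\}$ (the $-N$ penalty excludes other elements), forcing $G = (B \setminus \{y^\ast\}) \cup \{x\}$ for some $y^\ast \in B \setminus A$, hence $(B \setminus \{y^\ast\}) \cup \{x\} \in \mathcal{F}$. The minimality of $(A, B, x)$, combined with this fact, should yield a SEP witness for $(A, B, x)$, contradicting SEP failure.

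The main obstacle will be the closing contradiction in the backward direction: rigorously showing that greedy's output cannot match $c(B)$ when SEP fails at the minimal $(A, B, x)$. The delicacy is that SEP failure blocks only witnesses $y$ satisfying both $A \cup \{y\} \in \mathcal{F}$ and $(B \setminus \{y\}) \cup \{x\} \in \mathcal{F}$; one must use the minimality choice together with the accessibility of the greedoid applied to an extension of $A \cup \{x\}$ along $B \setminus \{y^\ast\}$ to produce a $y^\ast$ that jointly satisfies both SEP conditions. In the formalisation, this will require detailed loop invariants for \textsc{GreedoidGreedy} under this specific cost and ordering, and the order-based tie-breaking in the ``first best candidate'' rule must be treated explicitly, analogously to how stability of \texttt{sort\_desc} is used in the Best-In-Greedy tightness proof.
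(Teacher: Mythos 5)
Your forward direction (SEP implies greedy optimality) is the standard exchange argument and is correct, modulo routine facts you rightly take for granted (all bases are equicardinal by (M3), and the greedy output is a basis because a feasible set with no feasible one-element extension is maximal in a greedoid). The paper itself only sketches the converse in one sentence (contrapositive, counterexample construction depending on the order and on accessibility), and your overall strategy for that direction matches that sketch.

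However, the converse contains a genuine gap, and it sits exactly where you flag "the main obstacle" --- but your proposed repair does not work. Let $Y := \{y \in B \setminus A : A \cup \{y\} \in \mathcal{F}\}$. Your cost function gives $x$ and \emph{all} of $B \setminus A$ the same weight $1$, so optimality of the greedy output $G \supseteq A \cup \{x\}$ only forces $G = (B \setminus \{y^\ast\}) \cup \{x\}$ for \emph{some} $y^\ast \in B \setminus A$; it puts no constraint whatsoever on whether $y^\ast \in Y$, and if $y^\ast \notin Y$ you have no SEP witness and no contradiction --- indeed, in that case greedy genuinely is optimal for your weights, so no contradiction can exist. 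Minimality of $|B \setminus A|$ does not rescue this: passing to $(A \cup \{b\}, B, x)$ for $b \in Y$ requires $A \cup \{b, x\} \in \mathcal{F}$ (not given), and even then the witness it returns need only satisfy $A \cup \{b, y\} \in \mathcal{F}$, not $A \cup \{y\} \in \mathcal{F}$, since (M2) is unavailable in a greedoid. The missing idea is to make the weights \emph{non-uniform on $B \setminus A$}: assign the elements of $(B \setminus A) \setminus Y$ a weight strictly larger than $c(\{x\})$, and the elements of $Y$ a weight at most $c(\{x\})$ (with $A$ very heavy and everything outside $B \cup \{x\}$ very light). Greedy still selects $x$ immediately after completing $A$, because the heavier elements of $(B \setminus A) \setminus Y$ are, by definition of $Y$, not feasible candidates at that moment --- this is precisely where the interplay of order, weights, and accessibility matters. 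Then $c\bigl((B \setminus \{y^\ast\}) \cup \{x\}\bigr) \geq c(B)$ forces $c(\{y^\ast\}) \leq c(\{x\})$, hence $y^\ast \in Y$, and $y^\ast$ is the desired SEP witness; no minimal counterexample is needed at all. Without this adjustment your construction cannot be completed.
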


For the second direction ($\Leftarrow$), the usual proof is by constructing a counterexample to show the contrapositive.
This heavily depends on the order $e_1,...,e_n$ for candidate search, making use of the greedoids-accessibility relationship and order property of accessible set systems.
%There, we had to close some minor gaps that are in all informal expositions, as far as we are aware.
In Isabelle/HOL (Listing~\ref{isabelle:greedoid}), both costs and the order (formalised as a list) are fixed by a context in which the algorithm is modelled, making $e_1,...,e_n$  an input to the algorithm, as well, similar to the \texttt{order} parameter for \textsc{BestInGreedy}. 

We formalise set systems anew because the existing formalisation of independence systems~\cite{Matroids-AFP} uses a finite carrier set along with the independence predicate and assumes (M2).
%Their formalisation was designed to be extended to matroids using the augmentation property, which would not have worked for greedoids. This 
Thus, a formalisation of greedoids, which has to drop (M2), cannot be obtained from that. 
Note that, although we prove that a matroid according to Keinholz's formalisation is a greedoid, we formalise a distinct greedy-algorithm for matroids discussed in the last section rather than using \textsc{GreedoidGreedy}.
This is because, {even a matroid with the strong exchange property stated in Theorem~\ref{greedoid:characterisation}}, i.e.\ one for which \textsc{GreedoidGreedy} works, the worst-case running time of \textsc{GreedoidGreedy} is much worse than \textsc{BestInGreedy} -- \textsc{GreedoidGreedy} has a worst-case running time of $\mathcal{O}(|E|^2 \cdot \iota)$, where $\iota$ is the running time of the independence oracle, while \textsc{BestInGreedy} has $\mathcal{O}(|E| \cdot \iota)$.

\subparagraph*{Executability} We give an executable function that is equivalent to the one in Listing~\ref{isabelle:greedoid}. However, candidate search is not performed on all $e \in E$ {since anything in $X$ will never be a next best candidate.} For the current solution $X$, it only checks those $e$ where $e \in E \setminus X$. 

\subparagraph*{Instantiation} An \textit{arborescence} $T$ around a vertex $r$ in a graph with edges $E$ is an acyclic, connected subgraph of $E$ that contains $r$.
If an edge in $E$ is incident on $r$, the set of arborescences around $r$ forms a greedoid. Bases are spanning trees of the connected component of $r$ in $E$. The arborescence greedoid has strong exchange, making the instantiation of \textsc{GreedoidGreedy}, namely, the \textit{Jarnik-Prim Algorithm}~\cite{6773228} optimal.
We obtain a $\mathcal{O}(n \cdot |E|\cdot \log n)$ implementation to compute maximum weight bases where $n$ is the number of vertices in the component around $r$.
{Prim's and Kruskal's algorithm demonstrate that bounds tighter than what can be deduced from the running time of the abstract algorithm, which is $\mathcal{O}(|E|^2 \cdot \iota)$ and $\mathcal{O}(|E| \cdot \iota)$, respectively, is possible for concrete problems.}

\begin{figure}[t!]
\begin{lstlisting}[
 language=Isabelle,
 caption={Formalisation of greedoids, \textsc{GreedoidGreedy}, and the characterisation theorem. },
 label={isabelle:greedoid},
 captionpos=b,
 numbers=none,
 xleftmargin=0cm,
 columns=flexible
 ]

definition "set_system E F = (finite E \<and> (\<forall> X \<in> F. X \<subseteq> E))"

locale greedoid = fixes E :: "'a set" and F :: "'a set set"
  assumes contains_empty_set: "{} \<in> F"
  assumes third_condition: 
    "\<And> X Y. X \<in> F \<Longrightarrow> Y \<in> F \<Longrightarrow> card X > card Y \<Longrightarrow> \<exists>x \<in> X - Y.  Y \<union> {x} \<in> F"
  assumes ss_assum: "set_system E F"

locale greedoid_algorithm = greedoid +  (*Specification of oracles*) begin
context fixes es and c begin
definition "find_best_candidate X= foldr 
  (\<lambda> e acc. if e \<in> X \<or> \<not> orcl e X then acc
              else (case acc of Some d \<Rightarrow> (if c {e} > c {d} then Some e else Some d))
                                | None \<Rightarrow> Some e)     es None"
function greedoid_greedy::"'a list \<Rightarrow> 'a list"  where
  "greedoid_greedy xs = (case  (find_best_candidate (set xs))
      of Some e \<Rightarrow> greedoid_greedy (e#xs) | None \<Rightarrow> xs)"
end

theorem greedoid_characterisation: "strong_exchange_property E F \<longleftrightarrow> 
      (\<forall> c es. valid_modular_weight_func E c \<and>  E = set es \<and> distinct es 
          \<longrightarrow> opt_basis c (set (greedoid_greedy es c Nil)))"
\end{lstlisting}
\end{figure}

\section{Matroid Intersection}
The maximum cardinality matroid intersection problem asks for an $X$ of maximum cardinality in $\mathcal{F}_1 \cap \mathcal{F}_2$ for two matroids $(E, \mathcal{F}_1)$ and $(E, \mathcal{F}_2)$. A formal definition reusing the matroid locale~\cite{Matroids-AFP} is given in Listing~\ref{isabelle:inter_opt_def}. \textsc{BestInGreedy} from the last section adds an element if the solution remains independent. For intersection, adding an element might preserve independence w.r.t.\ $\mathcal{F}_1$ but not $\mathcal{F}_2$. The insight is thus to add an element, then potentially remove an element to maintain independence w.r.t.\ $\mathcal{F}_2$, then insert another element. This three-step process is repeated until an element is added that preserves independence w.r.t.\ both $\mathcal{F}_1$ and $\mathcal{F}_2$. The repetition of this process to improve the solution is called \textit{augmentation}.

The following is an optimality criterion for matroid intersection that involves ranks.
\begin{lem}[Rank Criterion, Edmonds' Intersection Theorem\label{lemma:ranks}~\cite{Edmonds2001SubmodularFM}]
For two matroids $(E, \mathcal{F}_1)$ and $(E, \mathcal{F}_2)$ over the same ground set $E$ with rank functions $r_1$ and $r_2$, respectively, $X \in \mathcal{F}_1 \cap \mathcal{F}_2$, 
and $Q \subseteq E$ it holds that $|X| \leq r_1(Q) + r_2(E \setminus Q)$. Therefore, $|X| \leq r_1(X) + r_2(E \setminus X)$, for any $X\in\mathcal{F}_1 \cap \mathcal{F}_2$.
\end{lem}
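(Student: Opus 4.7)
The plan is to prove the first inequality by splitting $X$ along the partition $(Q, E \setminus Q)$ and using the monotonicity/subset-closure axiom (M2) together with the definition of the upper rank, and then derive the second inequality as the special case $Q := X$.

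More concretely, given $X \in \mathcal{F}_1 \cap \mathcal{F}_2$ and an arbitrary $Q \subseteq E$, I would first define
\[
X_1 := X \cap Q, \qquad X_2 := X \setminus Q,
\]
so that $X_1$ and $X_2$ partition $X$ and hence $|X| = |X_1| + |X_2|$. Since $X_1 \subseteq X$ and $X \in \mathcal{F}_1$, axiom (M2) applied to the first matroid yields $X_1 \in \mathcal{F}_1$. Because $X_1 \subseteq Q$, this makes $X_1$ an independent subset of $Q$ in $(E, \mathcal{F}_1)$, so by the definition of the upper rank as the cardinality of the largest independent subset, $|X_1| \leq r_1(Q)$. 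Symmetrically, $X_2 \subseteq X \in \mathcal{F}_2$ gives $X_2 \in \mathcal{F}_2$ via (M2) for the second matroid, and $X_2 \subseteq E \setminus Q$ gives $|X_2| \leq r_2(E \setminus Q)$. Adding the two inequalities produces $|X| \leq r_1(Q) + r_2(E \setminus Q)$, which is the main claim.

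For the second inequality, I would simply instantiate $Q := X$ in the bound just proved; then $r_1(Q) = r_1(X)$ and $r_2(E \setminus Q) = r_2(E \setminus X)$, and the conclusion follows immediately.

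Mathematically there is no real obstacle here; the whole argument is two applications of (M2) and one application of the definition of the rank. In the formalisation the only mildly delicate points would be making sure the rank function is invoked with the correct carrier set (i.e.\ reasoning about $r_1$ with respect to the matroid on $E$ restricted via subset-bounds, rather than relying on any ``relativised'' rank) and checking that the partition step $|X| = |X \cap Q| + |X \setminus Q|$ gets discharged by a finite-cardinality lemma; both are routine in Isabelle/HOL once the relevant lemmas from Keinholz's matroid library are invoked.
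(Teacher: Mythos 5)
Your proposal is correct and matches the paper's own argument: both use the partition $X = (X \cap Q) \cup (X \setminus Q)$, appeal to (M2) to get independence of each part, and bound each cardinality by the rank of the containing set before adding. The derivation of the second inequality by setting $Q := X$ is likewise exactly what the paper intends.
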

\begin{proof}
Both $X \cap Q$ and $X \setminus Q$ are independent in both matroids. This implies $|X \cap Q| \leq r_1(Q)$ and $|X \setminus Q| \leq r_2(E \setminus Q)$. Both inequalities follow from the fact that $|Y| \leq r(Z)$ if $Y \subseteq Z$ and $Y$ independent, which essentially follows from the definition of the rank $r$ in a matroid. Of course, $|X \cap Q| + |X \setminus Q| = |X|$.
\end{proof}

\begin{figure}[t!]
\begin{lstlisting}[
 language=Isabelle,
 caption={Augmentation and Matroid Intersection.},
 label={isabelle:inter_opt_def},
 captionpos=b,
 numbers=none,
 xleftmargin=0cm,
 columns=flexible
 ]
locale double_matroid = matroid1: matroid carrier indep1 
 + matroid2: matroid carrier indep2 
begin
...
definition "is_max X = (indep1 X\<and>indep2 X\<and>\<nexists>Y. indep1 X\<and>indep2 X\<and>card Y > card X)"
definition "A1 = 
{(x, y) | x y. y \<in> carrier - X \<and> x \<in> matroid1.the_circuit (insert y X) - {y}}"
definition "A2 = ..."
definition "S = {y | y. y \<in> carrier - X \<and> indep1 (insert y X)}"
...
context assumes "indep1 X" "indep2 X"
lemma augment_in_both_matroids:
  assumes "\<nexists> q. vwalk_bet (A1 \<union> A2) x q y \<and> length q < length p" "x \<in> S" "y \<in> T" 
   "vwalk_bet (A1 \<union> A2) x p y" "X' = ((X \<union> {p ! i | i. i < length p \<and> even i}) 
              -  {p ! i | i. i < length p \<and> odd i})"
   shows "indep1 X'" and  "indep2 X'" and "card X' = card X + 1" 
...
theorem maximum_characterisation:
  "is_max X  \<longleftrightarrow> \<not> (\<exists> p x y. x \<in> S \<and> y \<in> T \<and> (vwalk_bet (A1 \<union> A2) x p y \<or> x= y))"
\end{lstlisting}
\end{figure}

The rank criterion cannot be exploited immediately for an algorithm since it does not have an immediate computational interpretation. It can be turned into a criterion that is computationally useful using augmentation, which we do in the next section.

\subsection{Augmentation}\label{sec:augment}
Fix a single matroid $(E, \mathcal{F})$.
For $X\in\mathcal{F}$, $x \not \in X$ and $X \cup \lbrace x  \rbrace\in \mathcal{F}$ there is a unique circuit in $X \cup \lbrace x \rbrace$. If there were two of these $C_1, C_2$, there would be a third one $C_3 \subseteq C_1 \cup C_2 \setminus \lbrace x \rbrace \subseteq X$, since both $C_1$ and $C_2$ need to contain $x$. Let $\mathcal{C}(X, x)$ be this unique circuit if $X \cup \lbrace x  \rbrace\in \mathcal{F}$, or $\emptyset$, otherwise. We observe that if $x \in \mathcal{C}(X, y)$ and $X \not \ni y \neq x \in X$, then $x$ can be replaced by $y$, i.e.\ $(X \setminus \lbrace x \rbrace) \cup \lbrace y \rbrace$ is independent. Under certain conditions, this replacement can be repeated as shown by the following lemma:
\begin{lem}[Replacement Lemma~\cite{FRANK1981328}\label{lemma:replacement}]
For a matroid $(E, \mathcal{F})$ and $X \in \mathcal{F}$, assume (1) $x_1, ..., x_s \in X$, (2) $y_1, ..., y_s \in E \setminus X$, (3) $x_k \in \mathcal{C}(X, y_k)$, for all $1\leq k\leq s$, and (4) $x_j \not \in \mathcal{C}(X, y_k)$ for all $1 \leq j < k \leq s$. If (1)-(4) hold, then $(X \setminus \lbrace x_1, ..., x_s\rbrace \cup \lbrace y_1, ..., y_s \rbrace) \in \mathcal{F}$.
\end{lem}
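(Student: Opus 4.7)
The plan is to prove the lemma by induction on $s$. As a preliminary observation, I would note that conditions (3) and (4) force the $x_i$ to be pairwise distinct and likewise the $y_i$: if $j<k$ and $x_j=x_k$, then (3) gives $x_j=x_k\in\mathcal{C}(X,y_k)$, contradicting (4); and if $y_j=y_k$ then $\mathcal{C}(X,y_j)=\mathcal{C}(X,y_k)$ contains $x_j$, again contradicting (4). This distinctness is used implicitly in the set-theoretic rearrangements below.

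For the base case $s=1$, condition (3) forces $\mathcal{C}(X,y_1)\neq\emptyset$, so $X\cup\{y_1\}$ is dependent. I would invoke the standard matroid fact that an independent set $X$ together with one extra element contains a \emph{unique} circuit: two distinct such circuits would both contain $y_1$, and circuit elimination at $y_1$ would yield a circuit inside the independent $X$. Removing $x_1\in\mathcal{C}(X,y_1)$ therefore produces the independent set $(X\setminus\{x_1\})\cup\{y_1\}$.

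For the inductive step, the crucial choice is to peel off the \emph{last} pair $(x_s,y_s)$ rather than the first, because hypothesis (4) is asymmetric in $j$ and $k$. I would apply the induction hypothesis to the restricted sequences $x_1,\ldots,x_{s-1}$ and $y_1,\ldots,y_{s-1}$, for which (1)--(4) manifestly still hold, obtaining that $X_{s-1}:=(X\setminus\{x_1,\ldots,x_{s-1}\})\cup\{y_1,\ldots,y_{s-1}\}$ is independent. Next, $\mathcal{C}(X,y_s)\subseteq X\cup\{y_s\}$ and by (4) avoids each of $x_1,\ldots,x_{s-1}$, so it is actually contained in $X_{s-1}\cup\{y_s\}$. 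Hence $X_{s-1}\cup\{y_s\}$ is dependent, its unique fundamental circuit equals $\mathcal{C}(X,y_s)$, and by (3) this circuit contains $x_s$. Removing $x_s$ therefore yields an independent set, and a short set computation using distinctness gives $(X_{s-1}\cup\{y_s\})\setminus\{x_s\}=(X\setminus\{x_1,\ldots,x_s\})\cup\{y_1,\ldots,y_s\}$, closing the induction.

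The main obstacle is identifying the correct direction of the induction: the asymmetry of (4) makes ``remove the last pair'' go through, whereas the superficially more natural ``remove the first pair'' would demand $\mathcal{C}(X,y_1)$ to avoid $x_2,\ldots,x_s$, which is not among the hypotheses. Beyond that, the routine ingredients are the uniqueness of the fundamental circuit in $X_{s-1}\cup\{y_s\}$, which is already a standard consequence of circuit elimination in the matroid library, and the elementary set equality identifying $(X_{s-1}\cup\{y_s\})\setminus\{x_s\}$ with the target set, for which the preliminary distinctness observation is essential.
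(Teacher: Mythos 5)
Your proof is correct and follows essentially the same route as the paper's: an induction that establishes independence of the partially exchanged sets $(X\setminus\{x_1,\ldots,x_l\})\cup\{y_1,\ldots,y_l\}$ in order, using that $\mathcal{C}(X,y_{l+1})$ avoids the already-deleted $x_j$ by (4) and the inserted $y_j$ by (2), hence survives as the unique fundamental circuit, and contains $x_{l+1}$ by (3). The only cosmetic difference is that you observe $X_{s-1}\cup\{y_s\}$ is forced to be dependent by (3), whereas the paper retains a (harmless) case split on its independence.
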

\begin{proof}
We show independence of $X_l := (X \setminus \lbrace x_1, ..., x_l\rbrace) \cup \lbrace y_1, ..., y_l \rbrace$ for $l \leq s$ by induction.
The theorem trivially holds for $l = 0$.
We assume the claim for $X_l$ where $0< l < s$: It might be that $X_l \cup \lbrace y_{l+1}\rbrace$ is independent implying independence of $X_{l+1}$ because of (M2). If $X_l \cup \lbrace y_{l+1}\rbrace$ is dependent, however, it contains a unique circuit $\mathcal{C}(X_l, y_{l+1})$ containing $y_{l+1}$ due to $X_l$'s independence. All deleted $x_1, ..., x_l$ cannot have been part of $\mathcal{C}(X, y_{l+1})$ because of (4) and neither any inserted $y_1, ..., y_{l}$ because of (2), $y_{l+1}$ could, however.  This implies $\mathcal{C}(X, y_{l+1}) \subseteq X_{l+1}$. Therefore $\mathcal{C}(X_l, y_{l+1}) = \mathcal{C}(X, y_{l+1})$. Due to (3), $x_{l+1} \in \mathcal{C}(X, y_{l+1})$, implying independence of $(X_l \setminus \lbrace x_{l+1} \rbrace) \cup \lbrace y_{l+1}\rbrace = X_{l+1}$.
\end{proof}

\noindent The formalisation of the above lemma contains an inductive proof on $s$ within the context of the locale \texttt{matroid}.
The $x_i$s and $y_i$s are formalised as a list of pairs.

From now on, we assume two matroids $(E, \mathcal{F}_1)$ and $(E, \mathcal{F}_2)$ over the same ground set $E$, or formally, we work in the \texttt{double\_matroid} locale (Listing~\ref{isabelle:inter_opt_def}). Following Korte and Vygen~\cite{KorteVygenOptimisation}, for a set $X \in \mathcal{F}_1 \cap \mathcal{F}_2$, we define an auxiliary graph $G_X$ with vertices $E$ and edges $A_{X,1} \cup A_{X,2}$ where $A_{X,1} = \lbrace (x, y) \;.\; y \in E \setminus X \wedge x \in \mathcal{C}_1(X,y) \setminus \lbrace y \rbrace \rbrace$ and $A_{X,2} = \lbrace (y, x) \;.\; y \in E \setminus X \wedge x \in \mathcal{C}_2(X,y) \setminus \lbrace y \rbrace \rbrace$.  $G_X$ is obviously bipartite between $X$ and $E \setminus X$. We define two sets $S_X = \lbrace y.\; y \in E \setminus X \wedge X \cup \lbrace y \rbrace \in \mathcal{F}_1\rbrace$ and
 $T_X = \lbrace y.\; y \in E \setminus X \wedge X \cup \lbrace y \rbrace \in \mathcal{F}_2\rbrace$.
 A path between a vertex from $S_X$ and another from $T_X$ indicates an  alternating sequence of insertion and deletion. Due to bipartiteness, the length will be odd allowing for an augmentation.
We make this precise as follows.

\begin{lem}[Augmentation Lemma\label{lemma:augment}~\cite{KorteVygenOptimisation}]
Let $X \in \mathcal{F}_1 \cap \mathcal{F}_2$ be independent in both matroids and  $p = y_0x_1y_1x_2y_2...x_sy_s$ a shortest path from $y_0\in S_X$ to $y_s\in T_X$.
Then $X' = (X \cup \lbrace y_0, ..., y_s \rbrace) \setminus \lbrace x_1,...,x_s\rbrace$ is independent in both matroids, i.e. $X' \in \mathcal{F}_1\cap \mathcal{F}_2$.
\end{lem}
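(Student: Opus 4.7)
The plan is to prove $X' \in \mathcal F_1$ and $X' \in \mathcal F_2$ separately, by two symmetric applications of the Replacement Lemma (Lemma~\ref{lemma:replacement}). A direct application is not possible because the path contributes $s+1$ insertions ($y_0,\dots,y_s$) against only $s$ deletions ($x_1,\dots,x_s$); to absorb the extra insertion, I would use $\tilde X_1 := X \cup \{y_0\} \in \mathcal F_1$ (valid since $y_0 \in S_X$) as the base independent set for the $\mathcal F_1$ direction, and $\tilde X_2 := X \cup \{y_s\} \in \mathcal F_2$ (valid since $y_s \in T_X$) for the $\mathcal F_2$ direction.

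For $X' \in \mathcal F_1$, I would apply the Replacement Lemma to $\tilde X_1$ with pairs $(x_k,y_k)$ for $k = 1,\dots,s$ in this order. Each path edge $x_k \to y_k$ lies in $A_{X,1}$, yielding $x_k \in \mathcal C_1(X, y_k)$. Since $y_0 \notin X \cup \{y_k\}$ (as $y_0 \notin X$ and $y_0 \neq y_k$ by distinctness of vertices on the shortest path $p$), the circuit $\mathcal C_1(X, y_k) \subseteq X \cup \{y_k\}$ remains a minimal dependent set in $\tilde X_1 \cup \{y_k\}$, so $\mathcal C_1(\tilde X_1, y_k) = \mathcal C_1(X, y_k)$. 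This discharges condition~(3) of the Replacement Lemma; conditions~(1) and~(2) follow from the bipartition of $G_X$ between $X$ and $E \setminus X$, together with distinctness of the $y_i$ on $p$.

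The main obstacle I expect is condition~(4) of the Replacement Lemma: that for $j < k$, $x_j \notin \mathcal C_1(X, y_k)$. This is precisely where the \emph{shortest}-path hypothesis is used. If it failed, the edge $(x_j, y_k)$ would belong to $A_{X,1}$, and I could replace the subwalk of $p$ from $x_j$ to $y_k$ (which has $2(k-j)+1$ edges) by this single edge, producing a walk from $y_0$ to $y_s$ of length $2s - 2(k-j) < 2s$, from which a strictly shorter simple path can be extracted, contradicting the minimality of $p$.

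For $X' \in \mathcal F_2$ the argument is symmetric, but with the pair ordering \emph{reversed}: apply the Replacement Lemma to $\tilde X_2$ with pairs $(x_s, y_{s-1}), (x_{s-1}, y_{s-2}), \ldots, (x_1, y_0)$ in that order. Each path edge $y_{k-1} \to x_k$ lies in $A_{X,2}$, giving $x_k \in \mathcal C_2(X, y_{k-1})$ for condition~(3); condition~(4) now rules out any edge $y_b \to x_a \in A_{X,2}$ with $a \geq b + 2$, again by a forward-shortcut argument on $p$. In both cases the resulting set equals $X' = (X \cup \{y_0, \ldots, y_s\}) \setminus \{x_1, \ldots, x_s\}$, since $y_0, y_s \notin X$ survive the deletions. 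On the Isabelle side, I anticipate the bookkeeping around the list-of-pairs representation of the replacements, distinctness of vertices on a shortest path, and the circuit-preservation step $\mathcal C_i(\tilde X_i, y) = \mathcal C_i(X, y)$ to be the most tedious, while the shortcut argument contradicting shortest-path minimality is the conceptual heart.
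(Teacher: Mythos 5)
Your proposal is correct and follows essentially the same route as the paper's proof: both directions are handled by applying the Replacement Lemma to $X\cup\{y_0\}$ with pairs $(x_k,y_k)$ in forward order for $\mathcal{F}_1$ and to $X\cup\{y_s\}$ with the pairs in reversed order for $\mathcal{F}_2$, with conditions (1)--(3) read off from the alternation of $p$ between $X$ and $E\setminus X$ and condition (4) obtained from the same shortcut contradiction with the shortest-path hypothesis. Your explicit note that $\mathcal{C}_1(X\cup\{y_0\},y_k)=\mathcal{C}_1(X,y_k)$ is a detail the paper glosses over but does need, and your anticipated formalisation pain points (list-of-pairs bookkeeping, alternation, distinctness) match what the authors report.
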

\begin{proof}
We apply Lemma~\ref{lemma:replacement} to $X\cup \lbrace y_0 \rbrace$, $x_1,..., x_s$ and $y_1,...,y_s$ to show $X' \in \mathcal{F}_1$: For all $0<i \leq s$, $x_{i}$ is the second vertex of the $(2i-1)$th edge of $p$. Because the path alternates between $E\setminus X$ and $X$, the $(2i-1)$th is in $A_{X,2}$ and $x_{i} \in X \cup \lbrace y_0 \rbrace$ for all $x_i$, yielding (1). For all $0< i \leq s$, $y_{i}$ is the second vertex of the $(2i)$th edge in $p$. Because of alternation and pairwise distinctness of the $y$s ($p$ is shortest path), $y_{i} \in E \setminus (X \cup \lbrace y_0 \rbrace)$, implying (2). 
Also, for all $0<i \leq s$, $x_{i}$ is the first vertex of the $(2i)$th edge, which is part of $A_{X,1}$ because of alternation. Therefore, $x_k \in \mathcal{C}_1(X\cup\lbrace y_0\rbrace, y_k)$ for all $1\leq k\leq s$ (3). We assume an $x_{i}$ and $x_j$ where $0 < j <i \leq s$ and $x_j \in \mathcal{C}_1(X\cup\lbrace y_0\rbrace, y_i)$. $x_i$ or $x_j$ is the first vertex of the $2i$th of $2j$th edge of $p$, respectively. $y_j$ and $y_i$ are the second vertices. Both of those edges are part of $A_{X,1}$. As $y_i \in E \setminus X$, and therefore $x_j \neq y_i$, the edge $(x_j, y_i) \in A_{X,1}$. We could then delete $y_j,...,x_{i-1}$ giving a shorter $S_X$-$T_X$-path. Hence (4) is satisfied and Lemma~\ref{lemma:replacement} can be applied.

Analogously, we take $X\cup \lbrace y_s \rbrace$, $x_s,..., x_1$ and $y_{s-1},...,y_0$ to show $X' \in \mathcal{F}_2$.
\end{proof}

%The proof is done by applying Lemma~\ref{lemma:replacement} to $X\cup \lbrace y_0 \rbrace$, $x_1,..., x_s$ and $y_1,...,y_s$, and on the other hand, to $X\cup \lbrace y_s \rbrace$, $x_s,..., x_1$ and $y_{s-1},...,y_0$. 
%There are also short proofs for the informally trivial cases of one-vertex paths. 
\noindent The statement is in Listing~\ref{isabelle:inter_opt_def}. {We show that the edges of paths used by Lemma~\ref{lemma:augment} alternate between $A_1$ and $A_2$ using an existing library on alternating lists for matchings~\cite{blossomIsabelleFull}.
From this we can easily deduce (1)-(3) to apply Lemma~\ref{lemma:replacement}. The re-applicability of matching theory is no coincidence since bipartite matching is an instance of matroid intersection as we show later.}

%{\color{red}The cardinality increase involved some reasoning, although its pretty obvious from an informal perspective.}

On top of improvement by augmentation, the absence of an augmenting path characterises maximality of $|X|$ for $X \in \mathcal{F}_1 \cap \mathcal{F}_2$:

\begin{thm}[Optimality Criterion\label{lemma:optcrit}~\cite{KorteVygenOptimisation}]
$X$ is a set of maximum cardinality in $\mathcal{F}_1 \cap \mathcal{F}_2$ iff $G_X$ does not contain a path from some $s\in S_X$ to some $t \in T_X$ (henceforth, an $S_X$-$T_X$ path).
Edmonds' max-min-equality~\cite{Edmonds2001SubmodularFM} is a corollary: $\max \lbrace |X|. \; X \in \mathcal{F}_1 \cap \mathcal{F}_2\rbrace = \min \lbrace r_1(Q) + r_2(E \setminus Q).\; Q \subseteq E\rbrace$.
\end{thm}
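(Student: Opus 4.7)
The plan is to prove the two directions separately, with the reverse direction carrying the real content. For ($\Rightarrow$), I would argue by contraposition: given any $S_X$--$T_X$ path (including the degenerate case $x = y$ with $x \in S_X \cap T_X$), a shortest such path triggers the Augmentation Lemma (Lemma~\ref{lemma:augment}) and yields $X' \in \mathcal{F}_1 \cap \mathcal{F}_2$ with $|X'| = |X| + 1$, contradicting maximality of $X$.

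For ($\Leftarrow$), the idea is to exhibit a certificate $Q \subseteq E$ with $|X| = r_1(Q) + r_2(E \setminus Q)$; combined with Lemma~\ref{lemma:ranks} this forces every $Y \in \mathcal{F}_1 \cap \mathcal{F}_2$ to satisfy $|Y| \leq |X|$. Let $R$ be the set of vertices reachable in $G_X$ from $S_X$ by a directed walk (length zero allowed, so $S_X \subseteq R$) and set $Q := E \setminus R$. The absence of an $S_X$--$T_X$ walk gives $T_X \subseteq Q$. Since $X \cap Q$ and $X \setminus Q = X \cap R$ partition $X$ and each is independent in both matroids, it suffices to establish (a) $X \cap Q$ is a basis of $Q$ in matroid $1$, and (b) $X \setminus Q$ is a basis of $R$ in matroid $2$.

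I would prove (a) by contradiction: suppose some $y \in Q \setminus X$ satisfies $(X \cap Q) \cup \{y\} \in \mathcal{F}_1$. If $X \cup \{y\}$ is itself in $\mathcal{F}_1$ then $y \in S_X \subseteq R$, contradicting $y \in Q$. Otherwise the unique fundamental circuit $\mathcal{C}_1(X, y) \subseteq X \cup \{y\}$ cannot lie entirely in $(X \cap Q) \cup \{y\}$, so it meets $X \setminus Q \subseteq R$ at some $x \neq y$, yielding an edge $(x, y) \in A_{X,1}$; extending a walk from $S_X$ to $x$ by this edge places $y$ in $R$, again contradicting $y \in Q$. Claim (b) is symmetric with the two matroids swapped: any witness $y$ either lies in $T_X \cap R$ (impossible) or produces an edge $(y, x) \in A_{X,2}$ that extends reachability from $y \in R$ into $Q$. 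Combining (a) and (b), $r_1(Q) + r_2(E \setminus Q) = |X \cap Q| + |X \setminus Q| = |X|$, as required. Edmonds' max--min equality is then immediate: Lemma~\ref{lemma:ranks} supplies the $\leq$ direction, and applying the construction above to any maximum $X$ realises equality with the specific $Q$ we built.

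The main obstacle I anticipate is the careful reachability--circuit bookkeeping in (a) and (b): one must invoke the uniqueness of the fundamental circuit in each matroid (a matroid-specific fact, not available for general independence systems), verify that the forced witness $x$ lands on the correct side of $Q$, and line up the orientation of the resulting edge with the correct arc set $A_{X,i}$. In the formalisation, this should reuse the circuit infrastructure of the single-matroid locale and the walk library already exercised for Lemma~\ref{lemma:augment}; I expect the degenerate $x = y$ case to require an explicit clause but no new mathematical ideas.
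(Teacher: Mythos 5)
Your proposal is correct and follows essentially the same route as the paper: the forward direction via a shortest augmenting path and Lemma~\ref{lemma:augment}, and the reverse direction by taking $R$ to be the set of vertices reachable from $S_X$, showing that $X\cap Q$ (for $Q=E\setminus R$) is a basis of $Q$ in matroid~$1$ and $X\cap R$ a basis of $R$ in matroid~$2$ via the same circuit-forces-an-arc-crossing-$R$ contradiction, and closing with the Rank Criterion (Lemma~\ref{lemma:ranks}). Your explicit handling of the degenerate $x=y$ case matches a clause that appears only in the paper's formal statement, not its prose proof.
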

\begin{proof}
If there is an $S_X$-$T_X$ path, there is a shortest one as well, which could be used for an augmentation according to Lemma~\ref{lemma:augment}, leading to an increase in the cardinality.

Now assume, that there is no $S_X$-$T_X$ path. We define $R$ as the set of vertices in $G_X$ that are reachable from $S_X$. Obviously $S_X \subseteq R$ and $R \cap T_X = \emptyset$. There are rank functions $r_1$ and $r_2$ w.r.t.\ $\mathcal{F}_1$ and $\mathcal{F}_2$, respectively. 

We prove $r_2(R) = |X \cap R|$ by contradiction: Since $X \cap R$ has to be independent w.r.t.\ the second matroid and therefore $ |X \cap R| = r_2(X \cap R)$, we would have $r_2(X \cap R) < r_2(R)$ since $r_2$ is monotone. Because of the strict inequality, there is $y \in R \setminus X$ where $(X \cap R) \cup \lbrace y \rbrace \in \mathcal{F}_2$. Otherwise, $X \cap R$ would be a basis of $R$ implying equal ranks. As $R \cap T_X = \emptyset$, $X \cup \lbrace y \rbrace \not \in \mathcal{F}_2$. There is $x \in X \setminus R$ with $x \in \mathcal{C}_2(X, y)$ (Otherwise $(X \cap R) \cup \lbrace y \rbrace \not \in \mathcal{F}_2$). By definition, $(y, x) \in A_{X,2}$. That makes $x$ part of $R$, which is a contradiction.

We prove $r_1(E\setminus X) = |X \setminus R|$ by contradiction: Since $X \setminus R$ has to be independent w.r.t.\ the first matroid and therefore $ |X \setminus R| = r_1(X \setminus R)$, we would have $r_1(X \setminus R) < r_1(E \setminus R)$ since $r_1$ is monotone. Because of the strict inequality, there is $y \in (E \setminus R) \setminus X$ where $(X \setminus R) \cup \lbrace y \rbrace \in \mathcal{F}_1$. Otherwise, $X \setminus R$ would be a basis of $E \setminus R$ implying equal ranks. As $S_X \subseteq R$, $X \cup \lbrace y \rbrace \not \in \mathcal{F}_1$. There is $x \in X \cap R$ with $x \in \mathcal{C}_1(X, y)$ (Otherwise $(X \setminus R) \cup \lbrace y \rbrace \not \in \mathcal{F}_1$). By definition, $(x,y) \in A_{X,1}$. That makes $y$ part of $R$, which is a contradiction.

Therefore, $|X| = r_1(E \setminus X) + r_2(X)$. Because of the rank criterion from Lemma~\ref{lemma:ranks}, $X$ satisfies $|X| \leq r_1(Q) + r_2(E \setminus Q)$ with equality. This gives the max-min equality and makes $X$ a set of maximum cardinality that is independent w.r.t.\ both matroids. 
\end{proof}

The formal proof of Theorem~\ref{lemma:optcrit} is split in two directions. The second direction also shows $r_2(R) = |X \cap R|$, $r_1(E\setminus X) = |X \setminus R|$ and $|X| = r_1(E \setminus X) + r_2(X)$ as statements to prove the max-min equality separately. The final formalisation can be seen in Listing~\ref{isabelle:inter_opt_def}.

%It should be noted that this proof of Lemma~\ref{lemma:minmax}, which was taken from~\cite{KorteVygenOptimisation}, differs from the original proof given in~\cite{Edmonds2001SubmodularFM}. In the latter, polyhedra and duality are used. 

%The formalisation of Lemma~\ref{lemma:optcrit} and Corollary~\ref{lemma:minmax} is straightforward. Isabelle statements are displayed in Listing~\ref{isabelle:optcrit}.

\subsection{Intersection Algorithm}
We exploit the previous subsection's results for an algorithm~\cite{3c49a919-60de-3f81-bd8d-100cee6b24f2,Lawler1975} that iteratively applies augmenting paths to solve maximum cardinality matroid intersection. 
The main invariant of the algorithm \textsc{MaxMatroidIntersection} is $X \in \mathcal{F}_1 \cap \mathcal{F}_2$ for the current solution $X$ (preservation proof by Lemma~\ref{lemma:augment}). Algorithm~\ref{algo:intersec} shows the pseudocode. In our main reference~\cite{KorteVygenOptimisation}, the circuits $\mathcal{C}_{1/2}(X, y)$, i.e. minimal dependent sets, are computed as explicit sets by iterating over all $x \in X \cup \lbrace y \rbrace$ and taking those where $X \cup \lbrace y \rbrace \setminus \lbrace x \rbrace$ is independent. $G_X$ is then computed by using the definition from above which requires further iterations over the circuits.
Our pseudocode does not need the circuits as an intermediate step and adds an edge $(x,y)$ or $(y,x)$ if $X \cup \lbrace y \rbrace$ is dependent or $X \setminus \lbrace x \rbrace \cup \lbrace y \rbrace$ is independent, respectively.
\begin{algorithm}[t]
\SetAlCapHSkip{0pt}
\SetAlgoHangIndent{0pt}
\SetInd{0.5em}{0.5em}
\SetVlineSkip{0.3mm}
\setlength{\algomargin}{10pt}
%\algsetup{linenosize=\footnotesize}
%\footnotesize
\caption{\textsc{MaxMatroidIntersection}($E$, $\mathcal{F}_1$, $\mathcal{F}_2$) \label{algo:intersec}}
Initialise $X \leftarrow \emptyset$;\\
\While{$True$}
{
\textit{compute $G_X$:} Initialise $S_X \leftarrow \emptyset$; $T_X \leftarrow \emptyset$; $A_{X,1} \leftarrow \emptyset$; $A_{X,2} \leftarrow \emptyset$;\\
\For{$y \in E \setminus X$}
{
\textbf{if} $X \cup \lbrace y\rbrace \in \mathcal{F}_1$ \textbf{then} $S_X \leftarrow S_X \cup \lbrace y\rbrace$;\\
\textbf{else}
\textbf{for} $x \in X$ \textbf{do} [ \textbf{if} $X \setminus \lbrace x \rbrace \cup \lbrace y \rbrace \in \mathcal{F}_1$ \textbf{then}
$A_{X,1} \leftarrow A_{X,1} \cup \lbrace (x, y) \rbrace$;]

\textbf{if} $X \cup \lbrace y\rbrace \in \mathcal{F}_2$ \textbf{then} $T_X \leftarrow T_X \cup \lbrace y\rbrace$;\\
\textbf{else}
\textbf{for} $x \in X$ \textbf{do} [ \textbf{if} $X \setminus \lbrace x \rbrace \cup \lbrace y \rbrace \in \mathcal{F}_2$ \textbf{then}
$A_{X,2} \leftarrow A_{X,2} \cup \lbrace (y, x) \rbrace$; ]

}
\uIf{$\exists$ path leading from $S_X$ to $T_X$ via the edges in $A_{X,1} \cup A_{X,2}$}{
find a shortest path $P = y_0x_1y_1...x_sy_s$ leading from $S_X$ to $T_X$;\\
augment along $P$: $X \leftarrow X \cup \lbrace y_0,...,y_s\rbrace \setminus \lbrace x_1,...,x_s\rbrace$;
}
\textbf{else}
\Return $X$ as maximum cardinality set in $\mathcal{F}_1 \cap \mathcal{F}_2$;
}
\end{algorithm}

\begin{figure}[t!]
%\makeatletter
%%\lst@AddToHook{EOL}{\vspace{-0.5\baselineskip}}
%%\lst@AddToHook{EveryLine}{\vspace{-0.5\baselineskip}\linebreak}
%%\lst@AddToHook{OnEmptyLine}{\vspace{\dimexpr-0.4\baselineskip}}
%%\lst@AddToHook{OnEmptyLine}{\vspace{\dimexpr-\baselineskip+\smallskipamount}} 
%\makeatother
%\lstset{emptylines=0.99}
\begin{lstlisting}[
 language=Isabelle,
 caption={Formalisation of the Intersection Algorithm.},
 label={isabelle:intersec_algo},
 captionpos=b,
 numbers=none,
 xleftmargin=0cm,
 columns=flexible
 ]  
definition "treat1 y X init_map = inner_fold X (\<lambda> x edges. 
   if weak_orcl1 y (set_delete x X)  then add_edge edges x y else edges) init_map"
definition "treat2 y X init_map = ..."
definition "compute_graph X E_without_X =
 outer_fold E_without_X (\<lambda> y (SX, TX, edges). (
  let (SX, TX, edges) = (if weak_orcl1 y X then (insert y SX, TX, edges) 
          else (SX, TX, treat1 y X edges));
      (SX, TX, edges) = (if weak_orcl2 y X then (SX, insert y TX, edges) 
          else (SX, TX, treat2 y X edges))
  in (SX, TX, edges))) (vset_empty, vset_empty, empty)"
  
fun augment where ...

function (domintros) matroid_intersection::"'mset intersec_state\<Rightarrow> 'mset intersec_state"  where
  "matroid_intersection state =
 (let X = sol state; (SX, TX, graph) = compute_graph X (complement X) in
     (case find_path SX TX graph of None \<Rightarrow> state |
      Some p \<Rightarrow> matroid_intersection (state \<lparr> sol := augment X p\<rparr>) ))"
      
definition "initial_state = \<lparr> sol= set_empty \<rparr>"
\end{lstlisting}
\end{figure}
\textsc{MaxMatroidIntersection} is again formalised within a locale that assumes the necessary subprocedures, most notably a function \texttt{find\_path} for path selection. This will take the graph $G_X$ as an adjacency map. We also have two weak independence oracles \texttt{orcl1} and \texttt{orcl2}, one for each matroid. We assume a Haskell-style \texttt{outer\_fold} and \texttt{inner\_fold} to simulate for-loops over sets in the functional language of Isabelle/HOL. We furthermore assume a function to compute set complements as required in Line 4 of Algorithm~\ref{algo:intersec}.
 %A look at the pseudocode confirms that both iteration and independence oracles need to work on $X$, whereas $E \setminus X$ only needs iteration. Hence, the data structures realising complements might be simpler and thus, easier to build during the complementation. We therefore assume that the types of $X$ and $E \setminus X$ are distinct. In the context of a graph problem, for example, the data structure for $X$ might use union-find to help with independence checking, but this is of little use when iterating over $E \setminus X$. In fact, this would be an additional burden since, when performing the complementation, the union-find structure would have to be filled in a way complying with the invariant of the overall data structure. 

Listing~\ref{isabelle:intersec_algo} shows the formalisation of the for-loops. \texttt{treat1} and \texttt{treat2} simulate Line 6 and Line 8, respectively. The outer for-loop between Lines 4 to 8 is realised by \texttt{compute\_graph}. We also have a simple function \texttt{augment} performing the augmentation, i.e.\ alternating deletion and insertion to a set. The function \texttt{matroid\_intersection} (while loop) calls the graph computation, gives $A_{X,1} \cup A_{X,2}$, $S_X$ and $T_X$ to the path selection function and terminates or calls augmentation according to the result of path selection. \texttt{matroid\_intersection} maps the solution $X$ before entering the loop (intended for Line 1 of Algorithm~\ref{algo:intersec}) to the $X$ after finishing the loop. 
We use a record with a single variable for the current solution.

%function (domintros) matroid_intersection::"'mset \<Rightarrow> 'mset"  where
%"matroid_intersection X =
% (let (SX, TX, graph) = compute_graph X (complement X) in
%     (case find_path SX TX graph of None \<Rightarrow> X |
%          Some p \<Rightarrow> matroid_intersection (augment X p) ))"

Again, we use invariants for loop verification. Independence of $X$ is the major invariant and data structure well-formedness invariants are minor ones. As for \textsc{BestInGreedy}, we prove single-step preservation for execution paths which is then lifted to preservation for the whole function by a computational induction. The first lemma in Listing~\ref{isabelle:intersec_invar} says that by executing one step (a) $X$ remains in $\mathcal{F}_1 \cap \mathcal{F}_2$ (by Lemma~\ref{lemma:augment}), (b) $X$ continues to satisfy its data structure invariant, (c) $X \subseteq E$, (d) $X$'s cardinality increases by $1$ (by Lemma~\ref{lemma:augment}) and (e) neither $S_X$ nor $T_X$ are empty. The Listing also contains the lemma certifying optimality if the terminating branch of \texttt{matroid\_intersection} is taken (by Theorem~\ref{lemma:optcrit}). The second lemma together with (a), (b) and (c) can be combined by the induction principle into correctness for general inputs $X$. When (d) and (e) are used, there is termination. Total correctness is proven when \texttt{matroid\_intersection} is called on \texttt{initial\_state}.
\begin{figure}[t!]
\begin{lstlisting}[
 language=Isabelle,
 caption={Invariants, Termination and Total Correctness of the Intersection Algorithm. Note that \texttt{f\_dom input} for a function \texttt{f} means that \texttt{f} terminates on \texttt{input}. },
 label={isabelle:intersec_invar},
 captionpos=b,
 numbers=none,
 xleftmargin=0cm,
  columns=flexible
 ] 
definition "indep_invar state =
   (indep1 (to_set (sol state)) \<and> (indep2 (to_set (sol state))))" 
 
lemma indep_invar_recurse_improvement:
  assumes "matroid_intersection_recurse_cond state" "indep_invar state"
          "set_invar (sol state)" "to_set (sol state) \<subseteq> carrier"
  shows   "indep_invar (matroid_intersection_recurse_upd state)" 
          "set_invar (sol (matroid_intersection_recurse_upd state))"
          "to_set (sol (matroid_intersection_recurse_upd state)) \<subseteq> carrier"
          "card (to_set (sol (matroid_intersection_recurse_upd state)) =
             card (to_set (sol state)) + 1"
          "S (to_set (sol state)) \<noteq> {}"  "T (to_set (sol state)) \<noteq> {}"
          
lemma indep_invar_max_found:
  assumes "matroid_intersection_terminates_cond state" "indep_invar state"...
  shows   "is_max (to_set (sol state))"
  
lemma matroid_intersection_correctness_general:
  assumes "indep_invar state" "set_invar (sol state)" "to_set (sol state) \<subseteq> carrier"
    shows "is_max (to_set (sol (matroid_intersection state)))"
         
lemma matroid_intersection_terminates_general:
  assumes  "indep_invar state" "set_invar (sol  state)" "to_set (sol state) \<subseteq> carrier"
           "m = card carrier - card (to_set (sol state))"
  shows    "matroid_intersection_dom state"
  
lemma matroid_intersection_total_correctness: 
     "is_max (to_set (sol (matroid_intersection initial_state)))" 
     "matroid_intersection_dom initial_state"
\end{lstlisting}
\end{figure}

%For the first two lemmas of Listing~\ref{isabelle:intersec_invar}, which are the conceptually most important ones, the formal proofs of Lemmas~\ref{lemma:augment} and~\ref{lemma:optcrit} are applied. As a prerequisite, one must prove that \textit{treat1}, \textit{treat2} and \textit{compute-graph} (Listing~\ref{isabelle:intersec_algo}) compute a representation of $G_X$. The respective lemmas show that the results satisfy the appropriate data structure invariants, and on the other hand, they give extensional definitions of the sets that are represented. Listing~\ref{isabelle:intersec_GX} contains an example. The extensional description of circuits (Listing~\ref{isabelle:replacement}) implies that we are indeed computing a representation of $G_X$.  

\subparagraph*{Running Time and Circuit Oracles} The number of iterations to build $G_X$ is $\mathcal{O}(|E|^2)$.
Each matroid comes with an oracle which is assumed to be $\iota_1$ or $\iota_2$, respectively. This results in $\mathcal{O}(|E|^3 \cdot (\iota_1 + \iota_2))$ overall running time. We might have to check $|X|\cdot (|E| - |X|)$ pairs of vertices. If - depending on the problem - the size of the circuits $\sigma_1$ and $\sigma_2$ is assumed to be small, e.g.\ $\mathcal{O}(\log |E|)$ or even $\mathcal{O}(1)$, most of the inner for-loops' iterations were wasted. If the circuits can be computed in time $\kappa_1, \kappa_2 \in o(|E|)$, the time would be the more efficient $\mathcal{O}(|E|^2 \cdot (\iota_1 + \iota_2 + \kappa_1 + \kappa_2 + \sigma_1 + \sigma_2))$.

In Isabelle/HOL, weak circuit oracles are assumed to return data structures storing $\mathcal{C}_{1/2}(X, y) \setminus \lbrace y \rbrace$ provided that $X$ is independent and $X \cup \lbrace y \rbrace$ is dependent. These are added to the locale for \textsc{MaxMatroidIntersection}, as well. A variant of the algorithm using circuit oracles is verified using the same methodology as before. 

\subparagraph*{Selecting an $S_X$-$T_X$ Path} There is a verified implementation of Breadth-First Search (BFS) that works on adjacency maps by Abdulaziz~\cite{graphAlgosBook}. It takes a set of sources and builds another adjacency map modelling a DAG of those edges that are actually explored by the search. If a vertex $v$ is reachable from a source $s$, there is a path $p$ in this DAG leading from a source $s'$ to $v$ such that $p$ is a shortest path in the actual graph leading from $s'$ to $v$. \texttt{find\_path} runs BFS for the sources $S_X$ and uses a DFS to obtain a path in the DAG and, hence, a shortest $S_X$-$T_X$ path.
For a concrete intersection problem and together with appropriate oracles, \texttt{find\_path} can be used to instantiate the locale of the intersection algorithm.

\subparagraph*{Bipartite Matching} An example for matroid intersection is \textit{maximum cardinality bipartite matching}. 
Consider a bipartite graph with edges $E$ between two disjoint sets of vertices $L$ and $R$.
For $e \in E$, we call the endpoint in $L$ the \textit{left} and the one in $R$ the \textit{right endpoint}.
We say that $M \subseteq E$ is independent w.r.t.\ $L$ iff no edges in $M$ share a left endpoint. Independence w.r.t.\ $R$ is defined analogously.
Valid \textit{matchings} are sets of vertex-disjoint edges in $E$ and are thus exactly those $M$ that are independent w.r.t.\ both $L$ and $R$. 
The two independence predicates satisfy the matroid axioms making this a matroid intersection problem.
Circuits w.r.t.\ $L$ or $R$ would be edges $e,d\in E$ sharing a left or right endpoint, respectively.

For $M \subseteq E$, we maintain a set data structure $[M]$, two maps $M_L$ and $M_R$, where $M_L$ associates every $x \in L$ with $e \in M$ for which $x$ is the left endpoint, and analogously for $M_R$. 
Weak independence and circuit oracles are very simple: If $M$ is independent w.r.t.\ $L$ ($R$), and $e \not \in M$, we check in $M_L$ ($M_R$) that $e$'s left (right) endpoint is not associated to any another edge.
If $M \cup \lbrace e \rbrace$ would be dependent w.r.t.\ $L$ or $R$, we would return the edges associated to $e$'s left or right endpoint as $\mathcal{C}_{L}(M,e)\setminus \lbrace e \rbrace$ or $\mathcal{C}_{R}(M,e)\setminus \lbrace e \rbrace$, respectively. We used these oracles to instantiate \textsc{MaxMatroidIntersection} to compute maximum matchings in bipartite graphs. The running time is $\mathcal{O} (\min \lbrace |L|, |R| \rbrace \cdot m \cdot (\log |R| + \log |L| + \log m))$ where the logarithms are due to using tree data structures, which is dominated by $\mathcal{O} (n \cdot m \cdot (\log n + \log m))$.
Without circuit oracles, we would have another multiplicative factor of $\min \lbrace |L|, |R| \rbrace$.

\section{Discussion}

We presented a formal analysis of algorithms to solve two optimisation problems for matroids: (a) computing the largest (in terms of accumulated weight) independent set in a given matroid and (b) computing the largest (in terms of cardinality) set that is independent w.r.t.\ two matroids.
The two algorithms are of great importance to practitioners and are implemented in a number of computer algebra systems, e.g.\ in Sage~\cite{sageMatroid} and Macaulay~\cite{macaulayMatroid}.
Additionally, we also briefly (due to space constraints) presented a formalisation of the analysis of an algorithm to solve an optimisation problem for greedoids, which are a generalisation of matroids.
In addition to formalising significant parts of the theory of matroids, greedoids, and the analysis of the algorithms, we showed that our approach can also be used to obtain practical executable verified implementations of important graph algorithms.
This demonstrates the potential practical role of matroids and greedoids to design well-factored formal mathematical libraries whereby proofs of multiple algorithms are done in one go that captures their mathematical essence, and that is later instantiated for different concrete examples.
The formalisation overall totalled 17.4K lines of proof script, with 11K lines dedicated to the theory of matroids and greedoids, 2.9K lines connecting graph theoretic problems to matroids and/or greedoids, and 3.5K lines on defining and verifying Kruskal's, Prim's, and the bi-partite matching algorithm.
Our formalisation is available at \url{https://doi.org/10.5281/zenodo.15758230}; it builds on and is part of an ongoing effort to formalise combinatorial optimisation in Isabelle/HOL~\cite{graphLibRepo}.

\subparagraph*{Related Work}
Most relevant to us is the formalisation of matroids in Isabelle/HOL by Keinholz~\cite{Matroids-AFP}.
Haslbeck et al.~\cite{Kruskal-AFP} formalised a classic proof of the correctness of the Best-In-Greedy Algorithm and obtained a verified, imperative implementation of Kruskal's algorithm.
They used a weak oracle based on a verified imperative union-find implementation by Löwenberg~\cite{loewenberg} to store the tree's connected components leading to their implementation having the best possible running time of $\mathcal{O}(m\alpha(n))$, where $\alpha$ is the inverse Ackermann function. 
Löwenberg also gives a functional version that we could use for our work, which would, however, not pay off in terms of running time.

In Lean, there is an ongoing effort by Nelson et al. to formalise matroid theory~\cite{lean_matroids}, including more advanced results about e.g.\ minors, isomorphism, Tutte's excluded minor theorem for finitary binary matroids, or Edmonds' Intersection Theorem (Lemma~\ref{lemma:ranks}), but (as of the time of writing) there seems to be no verified practical algorithms in that library.

Matroids were also formalised in Coq~\cite{MAGAUD2012406,Braun2024} and Mizar~\cite{Bancerek2008IntroductionTM}, however, both do not go very far.
Set systems were formalised in Isabelle/HOL by Edmonds and Paulson~\cite{EdmondsPaulsonCombDesign}, who did that in the context of verifying combinatorial designs.

Additionally, there are formalisations of matching algorithms~\cite{blossomIsabelleFull,RankingIsabelle} and Prim's algorithm~\cite{primsLammichNipkow}, in Isabelle/HOL, and of Dijkstra, Kruskal, and Prim~\cite{10.1007/978-3-030-81688-9_37}, in Coq, that are not based on matroids or greedoids, where all reasoning is done at the graph-theoretic level.
The Isabelle/HOL formalisation of Prim's algorithm~\cite{10.1007/978-3-030-81688-9_37} uses a priority queue to find edges to extend the spanning tree. We cannot use this more efficient (and standard) way due to conditions required to prove Theorem~\ref{greedoid:characterisation}.
Readers interested in comparing the style of formal reasoning when using matroids or greedoids, which is algebraic, to the one performed directly on graphs, which is more intuitive but combinatorial, should refer to those formalisations.

\subparagraph*{Future Work} There is one 'main' missing algorithm from our library which would be the most imminent piece of future work: the weighted version of the matroid intersection problem, whereby, instead of finding the largest cardinality set that is independent w.r.t.\ two matroids, one is to find the largest in terms of the weight of such a set, for a modular cost function.

Routine functions that check whether a matroid indeed satisfies the assumptions of a matroid, compute the dual of a matroid, etc., which are implemented in most computer algebra packages, are missing from our library and are an important next step for our work.
Naive implementations of those functions have a worst-case exponential running time, so utilising sophisticated enumeration algorithms~\cite{artOfProgrammingCombAlgos} would help with making that more practical.

\bibliography{long_paper,second_bib}
\end{document}